%%%%%%%%%%%%%%%%%%%%%%%%%%%%%%%%%%%%%%%%%%%%%%%%%%%%%%%%%%%%%%%%%%%%%%%%%%%%%%%%
%2345678901234567890123456789012345678901234567890123456789012345678901234567890
%        1         2         3         4         5         6         7         8

%\documentclass[letterpaper, 10 pt, conference]{ieeeconf}  % Comment this line out
                                                          % if you need a4paper
\documentclass[letterpaper, 10pt, conference]{ieeeconf}      % Use this line for a4
                                                          % paper

\IEEEoverridecommandlockouts                              % This command is only
                                                          % needed if you want to
                                                          % use the \thanks command
\overrideIEEEmargins
% See the \addtolength command later in the file to balance the column lengths
% on the last page of the document

\usepackage{amsmath}    % need for sub equations
\usepackage{amsfonts}
\usepackage{graphicx}   % need for figures
\usepackage{subcaption}
\usepackage{epsfig} 
\usepackage{color}
\usepackage[normalem]{ulem}
\usepackage{cancel}
\usepackage{amssymb}
\usepackage{color}
\usepackage[ruled,vlined,titlenotnumbered]{algorithm2e} 

\newcommand{\R}{\mathbb{R}} % real number
\newcommand{\veh}{Q} % vehicle
\newcommand{\cset}{\mathcal{U}}
\newcommand{\cfset}{\mathbb{U}}

\newcommand{\targetset}{\mathcal{L}}
\newcommand{\reachset}{\mathcal{V}}
\newcommand{\dz}{\mathcal{Z}} % danger zone
\newcommand{\Ulo}{\hat{U}} % Control logic matrix
\newcommand{\ulo}{\hat{u}} % control logic matrix element
\newcommand{\Rcm}{C} % reward coefficient matrix
\newcommand{\rcm}{c} % reward coefficient matrix element
\newcommand{\sv}{s} % safety value
\newcommand{\st}{K}
\newcommand{\Prm}{P} % priority matrix
\newcommand{\prm}{p} % priority matrix element

% potential conflict set (set of vehicles that are below safety threshold)
\newcommand{\pcs}{\mathcal{J}} 

% common control set (set of vehicles that can be avoided with a single control)

\newtheorem{defn}{Definition}
\newtheorem{prop}{Proposition}
\newtheorem{thm}{Theorem}
\newtheorem{rmk}{Remark}
\newtheorem{corr}{Corollary}

\title{\LARGE \bf
Multi-Vehicle Collision Avoidance via Hamilton-Jacobi Reachability and Mixed Integer Programming}

\author{Mo Chen*, Jennifer C. Shih*, and Claire J. Tomlin
\thanks{This work has been supported in part by NSF under CPS:ActionWebs (CNS-931843), by ONR under the HUNT (N0014-08-0696) and SMARTS (N00014-09-1-1051) MURIs and by grant N00014-12-1-0609, by AFOSR under the CHASE MURI (FA9550-10-1-0567). The research of M. Chen has received funding from the ``NSERC PGS-D'' Program.}
\thanks{* Both authors contributed equally to this work. All authors are with the Department of Electrical Engineering and Computer Sciences, University of California, Berkeley. \{mochen72,cshih,tomlin\}@berkeley.edu}
}

\begin{document}
\maketitle
\thispagestyle{empty}
\pagestyle{empty}

%%%
\begin{abstract}
Multi-agent differential games are important and useful tools for analyzing many practical problems. With the recent surge of interest in using UAVs for civil purposes, the importance and urgency of developing tractable multi-agent analysis techniques that provide safety and performance guarantees is at an all-time high. Hamilton-Jacobi (HJ) reachability has successfully provided safety guarantees to small-scale systems and is flexible in terms of system dynamics. However, the exponential complexity scaling of HJ reachability prevents its direct application to large scale problems when the number of vehicles is greater than two. In this paper, we overcome the scalability limitations of HJ reachability by using a mixed integer program that exploits the properties of HJ solutions to provide higher-level control logic. Our proposed method provides safety guarantee for three-vehicle systems -- a previously intractable task for HJ reachability -- without incurring significant additional computation cost. Furthermore, our method is scalable beyond three vehicles and performs significantly better by several metrics than an extension of pairwise collision avoidance to multi-vehicle collision avoidance. We demonstrate our proposed method in simulations.
\end{abstract}

% !TEX root = MIP.tex
\section{Introduction}
From projects such as Amazon Prime Air and Google Project Wing to other recent uses of unmanned aerial vehicles (UAVs), there is without a doubt an immense interest in using UAVs for civil purposes \cite{Debusk10, AUVSI16, Amazon16, BBC16}. Potential uses of UAVs include package delivery, aerial surveillance, and disaster response \cite{Tice91}; future applications of UAVs are only limited by imagination. As a result, government agencies such as the Federal Aviation Administration (FAA) and the National Aeronautics and Space Administration (NASA) are urgently working on UAV-related regulations \cite{FAA13, NASA16, Kopardekar16}.

Much research has gone into the area of multi-agent systems, which involve aspects of cooperation and asymmetric goals among the agents. In \cite{Fiorini98, Vandenberg08}, the authors assume that the vehicles will employ certain simple control strategies which induce velocity obstacles that must be avoided in order to maintain safety. Other approaches involved using potential functions to ensure collision avoidance while multiple agents maintain formation to travel along pre-specified trajectories \cite{Saber02, Chuang07}. Although approaches like these provide valuable insight to multi-agent systems, they do not flexibly offer the safety guarantees that are desirable in safety-critical systems.

Multi-agent systems have also been studied in the context of differential games, which are ideal for addressing safety-critical problems such as the ones involving UAVs we now urgently face, because of the safety and performance guarantees that differential game approaches can provide. The HJ formulation of differential games has been studied extensively and successfully applied to small-scale problems involving one or two vehicles \cite{Vaisbord88, Mitchell05, Fisac15, Ding08}. Besides providing safety guarantees, perhaps the most appealing feature of HJ-based methods is its flexibility in terms of the system dynamics. Unfortunately, the computation complexity of HJ-based methods scales exponentially with the number of vehicles in the system, making their direct application to multi-vehicle problems intractable.

Many attempts have also been made to use differential games to analyze larger-scale problems. For example, in works such as \cite{Tanimoto78, Su14, Fisac15b}, the authors discuss various classes of three-player differential game with different assumptions on the role of each agent in non-cooperative settings. For even larger systems, \cite{Lin13, Chen14, Chen15, Chen15b} provide promising results when varying degrees of structural assumptions can be made. However, none of these attempts at providing guarantees address the problem of unstructured flight, which may be important in some situations. In addition, having stronger safety guarantees in unstructured environments has the potential to make structured flight of UAVs more resilient to unforeseen circumstances. 

In this paper, we build on the HJ-based method for guaranteeing safety when no more than two vehicles are present. We augment the HJ-based method with a higher-level joint cooperative control strategy using a mixed integer program (MIP) inspired by the properties of the pairwise safety guarantee. Our proposed MIP scales well with the number of vehicles, provides safety guarantees for three vehicles, and results in significantly better performance for multi-vehicle systems in general compared to when not using the higher-level control logic. We provide a proof for the safety guarantee in a three-vehicle system, and illustrate the safety guarantee and performance benefits through simulations of multi-vehicle systems in various configurations.

% Introduction (1p)
%% Motivation
%% Related work
%% Summary of results

% !TEX root = MIP.tex
\section{Problem Formulation \label{sec:formulation}}
Consider $N$ vehicles, denoted $\veh_i, i = 1, 2, \ldots, N$, described by the following ordinary differential equation (ODE)
\vspace{-0.75em}
\begin{equation}
\label{eq:vdyn} % vehicle dynamics
\dot{x}_i = f_i(x_i, u_i), \quad u_i \in \cset_i, \quad i = 1,\ldots, N
\end{equation}

\noindent where $x_i \in \R^{n_i}$ is the state of the $i$th vehicle $\veh_i$, and $u_i$ is the control of $\veh_i$. Each of the $N$ vehicles may have some objective, such as getting to a set of goal states. Whatever the objective may be, each vehicle $\veh_i$ must at all times avoid the \textit{danger zone} $\dz_{ij}$ with respect to each of the other vehicles $\veh_j, j = 1, \ldots, N, j \neq i$. In general, the danger zones $\dz_{ij}$ may represent any relative configuration between $\veh_i$ and $\veh_j$ that are considered undesirable, such as collision. In this paper, we make the assumption that $x_i \in \dz_{ij} \Leftrightarrow x_j \in \dz_{ji}$, the interpretation of which is that between a pair of vehicles, an unsafe configuration is one in which either of the vehicle is the danger zone of the other.

If possible and desired, each vehicle would use a ``liveness controller'' that helps complete its objective. However, sometimes a ``safety controller'' must be used in order to prevent the vehicle from entering any danger zones with respect to any other vehicles. Since the danger zones $\dz_{ij}$ are sets of joint configurations, it is convenient to derive the set of relative dynamics between every vehicle pair from the dynamics of each vehicle specified in $\eqref{eq:vdyn}$. Let the relative dynamics between $\veh_i$ and $\veh_j$ be specified by the ODE
\vspace{-0.5em}
\begin{equation}
\label{eq:rdyn} % relative dynamics
\begin{aligned}
\dot{x}_{ij} &= g_{ij}(x_{ij}, u_i, u_j) \\
u_i &\in \cset_i, u_j \in \cset_j \quad i, j = 1, \ldots, N, i\neq j 
\end{aligned}
\end{equation}

We assume the functions $f_i$ and $g_{ij}$ are uniformly continuous, bounded, and Lipschitz continuous in arguments $x_i$ and $x_{ij}$ respectively for fixed $u_i$ and $(u_i, u_j)$ respectively. In addition, the control functions $u_i(\cdot)\in\cfset_i$ are drawn from the set of measurable functions\footnote{A function $f:X\to Y$ between two measurable spaces $(X,\Sigma_X)$ and $(Y,\Sigma_Y)$ is said to be measurable if the preimage of a measurable set in $Y$ is a measurable set in $X$, that is: $\forall V\in\Sigma_Y, f^{-1}(V)\in\Sigma_X$, with $\Sigma_X,\Sigma_Y$ $\sigma$-algebras on $X$,$Y$.}.

Given the vehicle dynamics in \eqref{eq:vdyn}, some joint objective, the derived relative dynamics in \eqref{eq:rdyn}, and the danger zones $\dz_{ij}, i,j = 1, \ldots, N, i \neq j$, we propose a cooperative safety control strategy that performs the following:

\begin{enumerate}
\item detect potential conflict based on the joint configuration of all $N$ vehicles;
\item allow vehicles that are not in potential conflict to complete their objective using a liveness controller;
\item among the vehicles in potential conflict, attempt to minimize the number of instances in which a vehicle gets into another vehicle's danger zone.
\end{enumerate}

For the case of $N=3$, we prove that our proposed control strategy guarantees that all vehicles will be able to stay out of all the danger zones with respect to the other vehicles, and thus guaranteeing safety. For all initial configurations in our simulations, all vehicles also complete their objectives.

% Problem formulation (0.5p)
%% A number of vehicles all wish to get to some target while avoiding others

% !TEX root = MIP.tex
\section{Methodology \label{eq:method}}
Our proposed method builds on HJ reachability theory, which in the case of $N=2$ guarantees no vehicle will enter another vehicle's danger zone and that the vehicles will eventually complete their joint objective \cite{Mitchell05}. HJ reachability becomes computationally intractable for $N>2$. To provide the same guarantees for $N=3$, we propose an MIP motivated by the properties of the HJ pairwise solution to specify a higher level control logic. While unable to provide hard guarantees for $N>3$, our proposed method is computationally tractable for much larger $N$, and performs significantly better than applying an extension of the pairwise HJ reachability solution when $N>3$.

\subsection{Hamilton-Jacobi Reachability \label{sec:HJI}}
HJ reachability has been studied extensively \cite{Mitchell05, Fisac15, Barron90, Bokanowski10, Margellos11} and found many successful applications \cite{Mitchell05, Ding08, Chen14, Margellos13}. Here, we give a brief overview of how to apply HJ reachability to solve a pairwise collision avoidance problem such as the one in \cite{Mitchell05}. Given the relative dynamics \eqref{eq:rdyn}, we define the target set to be the danger zone $\dz_{ij}$, and compute following the backward reachable set
\vspace{-1em}
%In the reachability problem, we are interested in determining the backwards reachable set (BRS) $\reachset(T)$, the set of states from which there exists no control, in the worst case disturbance (non-anticipative, see \cite{Mitchell05}), that can keep the system from entering some final set $\targetset$ within a time horizon $T$. If we define the final set $\targetset$ as the sub-zero level set of an implicit surface function $l(z)$, $\targetset = \{z\in\R^n: l(z) \le 0\}$, the BRS $\reachset(T)$ can be obtained as the sub-zero level set of the viscosity solution $V(t, z)$ \cite{Crandall83} of the following terminal value HJ PDE:
%
%\begin{equation} 
%\label{eq:HJIPDE}
%\begin{aligned}
%D_t V(t,z) + \min \{0, \max_{u\in\cset} \min_{d\in\dset} D_z V(t,z) \cdot f(z,u,d)\} &= 0 \\
%t \in [-T, 0] \\
%V(0,z) = l(z)&
%\end{aligned}
%\end{equation}
%
%Reachability theory, as well as our proposed application of it, is valid for any time horizon $T$; however, for clarity, we will let $T \rightarrow \infty$ in this paper, and denote the BRS as $\reachset = \{z \in \R^n: \lim_{t \rightarrow \infty} V(t, z) \le 0\}$. We will also use a slight abuse of notation and write $V(z) = \lim_{t \rightarrow \infty} V(t, z)$.
%
%In this case, the complement of the BRS is a control-invariant set: starting from any state outside of the BRS, there exists a control to forever keep the system from entering the BRS, and hence the final set. The optimal control that keeps the system out of the BRS is given by
%
%\begin{equation}
%u^* = \arg \max_{u\in\cset} \min_{d\in\dset} D_z V(z) \cdot f(z,u,d)
%\end{equation}

\begin{equation}
\label{eq:brs}
\begin{aligned}
&\reachset_{ij}(t) = \{x_{ij}: \forall u_i \in \cfset_i, \exists u_j \in \cfset_j, \\
&x_{ij}(\cdot) \text{ satisfies \eqref{eq:rdyn}}, \exists s \in [0, t], x_{ij}(s) \in \dz_{ij}\}
\end{aligned}
\end{equation}

If $x_{ij}$, the relative state of $\veh_i$ and $\veh_j$ is outside of $\reachset_{ij}$ for all $j$, then $\veh_i$ is free to use a liveness controller to make progress towards its objective. If $x_{ij}$ is on the boundary of $\reachset_{ij}$ for a single $j$, then danger can be averted, regardless of the action of $\veh_j$, by using the optimal control denoted $u_{ij}^*$, which can be obtained from the gradient of the value function $V_{ij}(t, x_{ij})$ representing $\reachset_{ij}(t)$. For details on obtaining $V_{ij}$, see \cite{Mitchell05}; for this paper, it is sufficient to note that $\reachset_{ij} = \{x_{ij}: V_{ij}(t, x_{ij}) \le 0\}$ where we assume $t \rightarrow \infty$ and write $V_{ij}(x_{ij}) = \lim_{t \rightarrow \infty} V_{ij}(t, x_{ij})$. The interpretation is that $\veh_i$ is guaranteed to be able to avoid collision with $\veh_j$ over an infinite time horizon as long as the optimal control $u_{ij}^*$ is applied as soon as the potential conflict occurs.%, represented by the sub-zero level set of $V_{ij}(x_{ij}) = \lim_{t \rightarrow \infty} V_{ij}(t, x_{ij})$, where $V_{ij}(t, x_{ij})$ satisfies

%\begin{equation} 
%\label{eq:HJIPDE_ij}
%\begin{aligned}
%&D_t V_{ij}(t, x_{ij}) + \\
%&\min \{0, \max_{u_i\in\cset_i} \min_{u_j\in\cset_j} D_{x_{ij}} V_{ij}(t, x_{ij}) \cdot f_{ij}(x_{ij}, u_i, u_j)\} = 0 \\
%&\qquad V_{ij}(0, x_{ij}) = l_{ij}(x_{ij})
%\end{aligned}
%\end{equation}

If $x_{ij}$ is in $\reachset_{ij}$ for more than one $j$, then the pairwise optimal controls $u_{ij}^*$ cannot guarantee safety. However, in this case, our proposed cooperative control strategy, which uses a MIP to provide a higher level control logic, can provide safety guarantees when $N=3$.

%Specializing this theory to a single vehicle in the system \eqref{eq:vdyn} trying to reach some target set $\targetset_i$, we would solve 
%
%\begin{equation} 
%\label{eq:HJIPDE_v}
%\begin{aligned}
%D_t V(t, x_i) + \min \{0, \min_{u_i\in\cset_i} D_{x_i} V(t, x_i) \cdot f_i(x_i, u_i)]\} &= 0 \\
%V(0, x_i) = l(x_i)&
%\end{aligned}
%\end{equation}
%
%\noindent and obtain the optimal liveness control
%
%\begin{equation}
%u^* = \arg \min_{u_i \in \cset_i} D_{x_i} V(t, x_i) \cdot f_i(x_i, u_i) \\
%\end{equation}

\subsection{The Mixed Integer Program \label{sec:MIP}}
For the $N > 2$ case, we use an MIP to provide higher level control logic to synthesize a cooperative safety controller. We first note two properties of the pairwise solution:

\begin{enumerate}
\item If every vehicle pair stays out of each other's danger zones, then the entire set of $N$ vehicles would be out of each other's danger zones.
\item Since the solution is pairwise, the safety controller derived from HJ reachability can only guarantee that some vehicle $i$ can avoid the danger zone with respect to a single other vehicle $j$.
\end{enumerate}

Intuitively, a higher level control logic is needed to provide a far-sighted avoidance maneuver; without this higher level logic, pairwise avoidance maneuvers between two vehicles $\veh_i$ and $\veh_j$ may lead to unavoidable dangerous configurations with respect to a third vehicle $\veh_k$. 

\begin{defn}
\textbf{Control logic matrix}: Let $\Ulo \in \{0, 1\}^{N \times N}$ be the control logic matrix specifying the joint cooperative control of the $N$ vehicles. Denote the element of $\Ulo$ in position $(i, j)$ to be $\ulo_{ij}$. If $\ulo_{ij} = 1$, then the control logic stipulates that vehicle $\veh_i$ must execute the pairwise optimal control to avoid vehicle $\veh_j$.
\end{defn}

\begin{defn}
\textbf{Reward coefficient matrix}: Let $\Rcm \in \R^{N \times N}$ be the reward coefficient matrix with elements $\rcm_{ij}$. Each $\rcm_{ij}$ specifies the ``reward'' for choosing to have vehicle $i$ avoiding vehicle $j$, or in other words, choosing $\ulo_{ij} = 1$.
\end{defn}

Motivated by the above two properties, and using the above definitions, we arrive at the following MIP:
\vspace{-0.75em}
\begin{equation}
\label{eq:baseMIP}
\begin{aligned}
\max_{\ulo_{ij}} & \sum_{i, j} c_{ij} \ulo_{ij} \\
\text{subject to } & \ulo_{ij} + \ulo_{ji} \le 1 & \forall i, j, i \neq j & \quad (\ref{eq:baseMIP}a) \\
& \sum_j \ulo_{ij} \le 1 & \forall i & \quad (\ref{eq:baseMIP}b) \\
& \ulo_{ij} \in \{0, 1\} & \forall i, j, i \neq j & \quad (\ref{eq:baseMIP}c)
\end{aligned}
\vspace{-0.5em}
\end{equation}

At a given time, the vehicles' joint state determines $\Rcm$, which forms the objective of \eqref{eq:baseMIP}. Thus, the interpretation of the objective of \eqref{eq:baseMIP} depends on the choice of the reward coefficient matrix $\Rcm$. A large $\rcm_{ij}$ encourages $\ulo_{ij}$ to be $1$, causing vehicle $\veh_i$ to avoid $\veh_j$. The decision variables consist of the elements of $\Ulo$, which provides the high level control logic. This is captured by constraint (\ref{eq:baseMIP}c). 

The pairwise HJ optimal control guarantees that a vehicle $\veh_i$ can remain safe with respect to another vehicle $\veh_j$ regardless of the action of $Q_j$. Therefore, in every pair $(\veh_i, \veh_j)$, if either $\veh_i$ or $\veh_j$ is avoiding the other, there is no need for the other vehicle to also be avoiding the first. The constraint (\ref{eq:baseMIP}a) states that out of every vehicle pair, at most one vehicle should avoid the other so that no control authority is wasted by having both vehicles avoid each other. The other vehicle then could use its control authority to avoid a third vehicle with whom it may come into conflict.

Finally, since the control logic ultimately results in vehicles performing pairwise optimal controls, each vehicle is only guaranteed to be able to avoid at most one other vehicle. The constraint (\ref{eq:baseMIP}b) encodes this limitation.

\subsection{Design of the Objective Function \label{sec:Rcm}}
The objective function in \eqref{eq:baseMIP} can be designed by choosing the reward coefficient matrix $\Rcm$. In general, there may be many choices for $\Rcm$, and the general guiding principle in choosing $\Rcm$ is that it should depend on the vehicles' safety levels and avoidance priority; both concepts are defined below. In this paper, we propose one particular choice of $\Rcm$ that allows us to prove safety guarantees for three vehicles.

Given the form of the objective function, the first obvious choice for some of the elements of $\Rcm$ would be $\rcm_{ii} = -\infty, \forall i$. This forces $\ulo_{ii} = 0~\forall i$, which states that a vehicle $\veh_i$ does not need to avoid itself. Before designing the rest of $\Rcm$, we need to define the notion of a safety level.

\begin{defn}
\textbf{Safety level}: Given $x_{ij}$, the state of vehicle $\veh_i$ with respect to vehicle $\veh_j$, define the safety level to be $V_{ij}(x_{ij})$. For convenience, let $\sv_{ij} = V_{ij}(x_{ij})$.
\end{defn}

\begin{prop}
Suppose $\sv_{ij} > 0$ at some time $t = t_0$. If $\veh_i$ chooses the control  $u_{ij}^*$, then $\sv_{ij} > 0 ~ \forall t > t_0$.
\end{prop}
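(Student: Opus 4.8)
The plan is to show that the open set $\{x_{ij} : V_{ij}(x_{ij}) > 0\}$, i.e.\ the complement of the converged backward reachable set $\reachset_{ij}$, is forward invariant under the closed-loop dynamics in which $\veh_i$ applies $u_{ij}^*$ while $\veh_j$ applies an arbitrary (worst-case) control. First I would recall the variational characterization behind $V_{ij}$: let $l(\cdot)$ be a Lipschitz surface function with $\dz_{ij} = \{x_{ij} : l(x_{ij}) \le 0\}$, and recall from \cite{Mitchell05} that the finite-horizon value
\[
V_{ij}(t, x_{ij}) = \max_{u_i(\cdot)} \min_{u_j(\cdot)} \min_{\tau \in [0,t]} l\big(x_{ij}(\tau)\big),
\]
computed with nonanticipative strategies along \eqref{eq:rdyn} from the state $x_{ij}$, satisfies $\reachset_{ij}(t) = \{V_{ij}(t,\cdot) \le 0\}$, and that $V_{ij} = \lim_{t\to\infty} V_{ij}(t,\cdot)$ is the time-invariant value of the corresponding infinite-horizon game, with $u_{ij}^*$ the control attaining the outer maximum. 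Consistently with \eqref{eq:brs}, $\veh_i$ is the maximizing (safety-seeking) player and $\veh_j$ the minimizing (worst-case) player.

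The central step is the dynamic programming principle for the converged value function: for any $\Delta > 0$,
\[
V_{ij}(x_{ij}(t_0)) = \max_{u_i}\min_{u_j} \min\Big\{ \min_{\tau\in[t_0, t_0+\Delta]} l(x_{ij}(\tau)),\ V_{ij}(x_{ij}(t_0+\Delta)) \Big\}.
\]
Since $\veh_i$ plays the maximizing control $u_{ij}^*$, the outer maximum is attained, so for every admissible $u_j$,
\[
\min\Big\{ \min_{\tau\in[t_0,t_0+\Delta]} l(x_{ij}(\tau)),\ V_{ij}(x_{ij}(t_0+\Delta)) \Big\} \ge V_{ij}(x_{ij}(t_0)) = s_{ij}(t_0) > 0.
\]
In particular $V_{ij}(x_{ij}(t_0+\Delta)) \ge s_{ij}(t_0) > 0$, so the safety level does not decrease over $[t_0, t_0+\Delta]$ regardless of $\veh_j$. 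Because $V_{ij}$ is time-invariant, this step applies starting from $t_0 + \Delta$ and iterates (equivalently, extends by the semigroup property of the flow), yielding $s_{ij}(t) \ge s_{ij}(t_0) > 0$ for all $t > t_0$, which is the claim; as a by-product the bound $\min_\tau l(x_{ij}(\tau)) \ge s_{ij}(t_0) > 0$ shows the trajectory never enters $\dz_{ij}$, matching the infinite-horizon safety interpretation. Equivalently, where $V_{ij}$ is differentiable the same conclusion follows infinitesimally: the converged value satisfies the stationary HJI equation $\max_{u_i}\min_{u_j} \nabla V_{ij}\cdot g_{ij}(x_{ij}, u_i, u_j) = 0$ on $\{V_{ij} > 0\}$, so evaluating at $u_i = u_{ij}^*$ gives $\tfrac{d}{dt} V_{ij}(x_{ij}(t)) = \nabla V_{ij}\cdot g_{ij}(x_{ij}, u_{ij}^*, u_j) \ge 0$ for every $u_j$.

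The main obstacle, and where genuine care is needed, is the rigorous justification of the dynamic programming principle in the infinite-horizon limit together with the information pattern. The ``$\forall u_i, \exists u_j$'' order in \eqref{eq:brs} must be realized through nonanticipative strategies so that the max-min value is well defined and the DPP holds; rather than reprove this I would lean on the established viscosity-solution theory for reachability (\cite{Mitchell05, Barron90}). A second subtlety is that $u_{ij}^*$ is defined via $\nabla V_{ij}$, so the infinitesimal argument is only valid at points of differentiability; at the nonsmooth points one must interpret the HJI equation in the viscosity sense, or simply rely on the DPP form above, which requires no differentiability. The strict hypothesis $s_{ij}(t_0) > 0$ is helpful here, since it keeps the trajectory in the interior of the safe region, away from the boundary $\{V_{ij} = 0\}$ where $V_{ij}$ is typically nonsmooth and where optimal-control selection is most delicate.
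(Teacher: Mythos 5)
Your proof is correct, and it is a substantially more careful version of what the paper does. The paper's own proof is a two-line definitional appeal: $\sv_{ij}>0$ means $x_{ij}\notin\reachset_{ij}$, so by the meaning of the infinite-horizon backward reachable set the control $u_{ij}^*$ guarantees collision avoidance forever, and this is asserted to imply $\sv_{ij}>0$ for all later times. The step the paper glosses over is precisely the one you make explicit: avoiding $\dz_{ij}$ forever does not by itself say the state stays outside $\reachset_{ij}$; what does is the dynamic programming principle, which shows $V_{ij}(x_{ij}(t_0+\Delta))\ge V_{ij}(x_{ij}(t_0))$ under the maximizing control regardless of $u_j$, i.e.\ the safety level is nondecreasing along the closed-loop trajectory. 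Your argument therefore buys a genuine strengthening (monotonicity of $\sv_{ij}$, not just positivity) and correctly flags the two technical points the paper leaves implicit -- the nonanticipative information pattern needed for the max--min value and the DPP in the infinite-horizon limit, and the nonsmoothness of $V_{ij}$ that makes the infinitesimal HJI version only formal. The paper's version buys brevity and is adequate at the level of rigor of the rest of the paper, but if one wanted a self-contained proof, yours is the one to write down.
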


\begin{proof}
Based on the definitions of $\sv_{ij}$, $\reachset_{ij}$, and $V_{ij}(x_{ij})$, we have that if $\sv_{ij} > 0$ at $t = t_0$, then the control $u_{ij}^*$ collision avoidance for an infinite time horizon. This implies $\sv_{ij} > 0$ for all time.
\end{proof}

\begin{corr}
\label{corr:pw_safety}
Between the pair $(\veh_i, \veh_j)$, if $\sv_{ij} > 0$ or $\sv_{ji} > 0$, then there exists a joint control strategy $(u_i, u_j)$ to ensure neither vehicle enters the danger zone of the other.
\end{corr}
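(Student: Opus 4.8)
The plan is to reduce the corollary to a single application of Proposition 1, with the symmetry of the danger zones doing the work of upgrading a one-sided guarantee into a guarantee for the pair. The observation I would put up front is that, by the standing assumption $x_i \in \dz_{ij} \Leftrightarrow x_j \in \dz_{ji}$, the events ``$\veh_i$ enters $\dz_{ij}$'' and ``$\veh_j$ enters $\dz_{ji}$'' coincide: there is a single unsafe relative configuration for the pair, so keeping the relative state out of $\dz_{ij}$ is the same as keeping it out of $\dz_{ji}$. Consequently it suffices to exhibit a joint control under which the relative state never enters this common danger zone.

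I would then split into the two cases of the hypothesis. If $\sv_{ij} > 0$, I would set $u_i = u_{ij}^*$ and let $u_j$ be arbitrary (for instance $\veh_j$'s liveness controller). Proposition 1 gives $\sv_{ij} > 0$ for all subsequent time, and this holds \emph{regardless} of $\veh_j$'s action because $\reachset_{ij}$ is defined against the worst case behavior of $\veh_j$; by $\reachset_{ij} = \{x_{ij} : V_{ij}(x_{ij}) \le 0\}$, the condition $\sv_{ij} = V_{ij}(x_{ij}) > 0$ for all time means $x_{ij} \notin \dz_{ij}$ for all time, so $\veh_i$ never enters $\dz_{ij}$, and by symmetry $\veh_j$ never enters $\dz_{ji}$. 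The case $\sv_{ji} > 0$ is entirely symmetric: I would take $u_j = u_{ji}^*$ with $u_i$ arbitrary and invoke Proposition 1 for the ordered pair $(j,i)$. In either case the exhibited $(u_i, u_j)$ is the desired joint strategy.

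The only genuine subtlety, and the step I would be most careful to make explicit, is the passage from a \emph{unilateral} guarantee to a guarantee about the \emph{pair}: Proposition 1 only promises that the avoiding vehicle stays out of its own danger zone against an adversarial opponent, and says nothing directly about the other vehicle's danger zone. The symmetry hypothesis is precisely what closes this gap, so I would cite it as the pivotal ingredient rather than slip past it. A minor technical point worth recording is that choosing $u_j$ ``arbitrary'' is legitimate exactly because the $\forall u_i,\, \exists u_j$ quantifier structure in \eqref{eq:brs} casts $\veh_j$ as the maximizer over all measurable controls; being outside $\reachset_{ij}$ therefore yields a guarantee robust to any admissible control of the non-avoiding vehicle.
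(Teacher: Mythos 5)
Your proposal is correct and follows essentially the same route as the paper's own proof: case-split on which of $\sv_{ij}, \sv_{ji}$ is positive, have that vehicle apply its pairwise optimal control, and invoke Proposition 1 (the paper's version is just terser, omitting the explicit appeal to the symmetry assumption $x_i \in \dz_{ij} \Leftrightarrow x_j \in \dz_{ji}$ that you rightly flag as the step upgrading the unilateral guarantee to one for the pair).
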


\begin{proof}
If $\sv_{ij} > 0$, then safety is guaranteed if $\veh_i$ chooses $u_i = u_{ij}^*$ to avoid $\veh_j$. If $\sv_{ij} \le 0$, then $\sv_{ji} > 0$. In this case, simply swap the indices $i$ and $j$.
\end{proof}

Let $\st$ be a safety level threshold. We say $\veh_i$ is in \textit{potential conflict} with $\veh_j$ if $s_{ij} < \st$. Based on this safety level threshold, we set $\rcm_{ij} = -1$ whenever $\sv_{ij} > \st$. So far, we have $\rcm_{ij} = -\infty$ whenever $i = j$ and $\rcm_{ij} = -1$ whenever $\sv_{ij} > \st$. The rest of the values of $\Rcm$ are derived from the priority matrix, defined below.

\begin{defn}
\textbf{Priority matrix}: Let $\Prm \in \{1, 2, \ldots, N^2-N\}^{N \times N}$ be a priority matrix with elements $\prm_{ij}$. The priority matrix establishes an avoidance order for the vehicles.

The diagonal elements of $\Prm$ can be arbitrarily set (denoted $*$). The rest of the elements are assigned in descending order according to Sarrus' rule \cite{Khattar10} (for determining cross products). For example, in the case of $N = 3$,
\vspace{-0.5em}
\begin{equation}
\begin{aligned}
&\Prm =
\begin{bmatrix}
* & 6 & 3 \\
2 & * & 5 \\
4 & 1 & *
\end{bmatrix}
\end{aligned}
\end{equation}
\end{defn}

A large value of $\prm_{ij}$ indicates that $\veh_i$ should avoid $\veh_j$ with a high priority. In order to impose such a priority when constructing a joint cooperative safety control strategy, we set $\rcm_{ij} = \prm_{ij}^2$ whenever $\sv_{ij} \le K$. For example, if $N = 3$ and $\sv_{ij} \le K ~ \forall i, j$, then we would have
\vspace{-0.5em}
\begin{equation}
\vspace{-0.5em}
\label{eq:Rcmfull}
\Rcm =
\begin{bmatrix}
-\infty & 36 & 9 \\
4 & -\infty & 25 \\
16 & 1 & -\infty
\end{bmatrix}
\end{equation}

As another example, if $N = 3, \sv_{ij} \le K ~ \forall i, j$ except $\sv_{13}, \sv_{32} > K$, then we would have
\vspace{-0.75em}
\begin{equation}
\vspace{-0.5em}
\Rcm =
\begin{bmatrix}
-\infty & 36 & -1 \\
4 & -\infty & 25 \\
16 & -1 & -\infty
\end{bmatrix}
\end{equation}

\begin{rmk}
Avoidance priority is an important notion for guaranteeing safety even when $N = 2$. Consider the scenario where vehicle $\veh_i$ applies the control $u_i^*$ to avoid $\veh_j$, but $\veh_j$ does not try to avoid $\veh_i$. As long as $\veh_i$ continues to avoid $\veh_j$, the two vehicles can avoid each other's danger zones.

While $\veh_i$ is avoiding $\veh_j$, $\sv_{ij}$ is guaranteed to remain positive; however, since $\veh_j$ is not avoiding $\veh_i$, $\sv_{ji}$ could become negative. If $\sv_{ji} < 0$, safety can \textit{only} be guaranteed if $\veh_i$ keeps avoiding $\veh_j$. The avoidance priority ensures that some $\veh_j$ never tries to avoid $\veh_i$ when $\sv_{ji} < 0$. Instead, the responsibility of avoidance would remain with $\veh_i$, which continues to avoid $\veh_j$ to ensure $\sv_{ij} > 0$.
\end{rmk}

% Solution methodology (1.5-2p)
%% HJ Reachability
%% Mixed integer program

% !TEX root = MIP.tex
\section{Safety Guarantee For Three Vehicles}
The method for constructing a joint safety controller described in Section \ref{eq:method} guarantees safety when $N = 3$. We now formally states this guarantee and prove the result.

\begin{thm}
\label{thm:main_result}
Suppose $N = 3$. If $\sv_{12}, \sv_{23}, \sv_{31} \ge 0$ at some time $t = t_0$, then the joint control strategy from the MIP \eqref{eq:baseMIP} with the reward coefficient matrix elements $\rcm_{ij}$ chosen in Section \ref{sec:Rcm} guarantees that $\sv_{12}, \sv_{23}, \sv_{31} \ge 0 ~ \forall t > t_0$.
\end{thm}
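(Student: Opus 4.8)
The plan is to separate the argument into a reduction step, a continuity/feedback step, and a combinatorial core. First I would observe that the theorem's conclusion is already the right safety certificate: by the definition of the safety level and Corollary~\ref{corr:pw_safety}, $\sv_{ij}\ge 0$ certifies that $\veh_i$ can stay out of $\dz_{ij}$, and with the symmetry assumption $x_i\in\dz_{ij}\Leftrightarrow x_j\in\dz_{ji}$ this means the pair $(i,j)$ avoids collision. Since the three forward-cycle indices $(1,2),(2,3),(3,1)$ correspond to the three unordered vehicle pairs $\{1,2\},\{2,3\},\{1,3\}$, keeping $\sv_{12},\sv_{23},\sv_{31}\ge 0$ for all $t$ covers every pair. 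So it suffices to show the MIP feedback maintains these three particular safety levels.

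Next I would reduce the dynamic claim to an instantaneous statement about \eqref{eq:baseMIP}. Because each $\sv_{ij}$ evolves continuously in $t$, a forward-cycle level can approach $0$ only after first entering the potential-conflict band $0<\sv_{ij}\le\st$. The goal is therefore to show that throughout that band the MIP assigns $\ulo_{ij}=1$, i.e. $\veh_i$ executes $u_{ij}^*$; Proposition~1 (with the boundary behavior of $\reachset_{ij}$ handling the $\sv_{ij}=0$ case) then prevents $\sv_{ij}$ from crossing $0$. A positive threshold $\st$ supplies the margin, so avoidance switches on while $\sv_{ij}$ is still strictly positive, and continuity guarantees that the on/off switching of $\ulo_{ij}$ as levels cross $\st$ is harmless: $\sv_{ij}$ cannot jump from above $\st$ to below $0$ without re-engaging avoidance inside the band.

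The heart of the proof is the combinatorial claim: whenever a forward-cycle pair satisfies $\sv_{ij}\le\st$, every optimal solution of \eqref{eq:baseMIP} sets $\ulo_{ij}=1$. I would prove this by an exchange argument. Any edge with $\sv_{ij}>\st$ has reward $\rcm_{ij}=-1<0$ and is never selected, and $\rcm_{ii}=-\infty$ forces the diagonal to $0$, so only conflicting edges are candidates. For a given forward edge, constraints~(\ref{eq:baseMIP}a) and~(\ref{eq:baseMIP}b) show that setting $\ulo_{ij}=1$ conflicts with at most two other variables, namely the row-mate $\ulo_{ik}$ and the reverse $\ulo_{ji}$. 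Hence if an optimal solution omitted a conflicting forward edge, re-inserting it while deleting those at most two edges would change the objective by at least $\prm_{ij}^2-(\prm_{ik}^2+\prm_{ji}^2)$. The Sarrus-rule assignment puts the three largest priorities on the forward cycle, and squaring makes all three of these gaps strictly positive: $36-(9+4)$, $25-(4+1)$, and $16-(1+9)$ are $23,20,6>0$. This contradicts optimality, so every optimum contains all conflicting forward edges, which is exactly the instantaneous statement needed in the previous step.

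I expect the squaring to be the delicate point and the main obstacle to get right. Without it, the third gap degenerates to $\prm_{31}-(\prm_{32}+\prm_{13})=4-(1+3)=0$, so the plain priorities give only a non-strict inequality and the MIP could legitimately sacrifice the $(3,1)$ avoidance for the two backward edges; squaring is precisely what restores strictness across all three forward edges at once. A secondary care point is the transition between the instantaneous MIP guarantee and the fixed-control form of Proposition~1, since the selected controls may toggle as levels cross $\st$; the continuity of $\sv_{ij}$ together with the positive margin $\st$ is what reconciles the two.
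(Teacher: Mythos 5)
Your proof is correct and follows the same overall skeleton as the paper's: reduce the dynamic claim to showing that whenever a forward-cycle pair $(1,2)$, $(2,3)$, or $(3,1)$ enters the band $0\le \sv_{ij}\le \st$, every optimal solution of \eqref{eq:baseMIP} sets $\ulo_{ij}=1$, and then invoke Proposition~1 and Corollary~\ref{corr:pw_safety}. Where you genuinely differ is in the combinatorial core. The paper fixes one instance of $\Rcm$ (the all-conflict case \eqref{eq:Rcmfull}) and enumerates the two ways an optimum could exclude $\ulo_{12}$ (via $\ulo_{21}=1$ or $\ulo_{13}=1$), computing the best achievable objective in each branch ($20$ and $34$) and comparing against the value $36$ obtained by selecting $\ulo_{12}$ alone; the remaining configurations of $\Rcm$ are dispatched with ``the rest of the cases follow a similar logic.'' Your exchange argument replaces this enumeration with a single local perturbation: re-insert the omitted conflicting forward edge and delete the at most two variables it blocks (the row-mate and the reverse edge), gaining at least $\prm_{ij}^2-(\prm_{ik}^2+\prm_{ji}^2)\in\{23,20,6\}>0$. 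This is uniform over all choices of which edges are in conflict, so it closes the case analysis the paper leaves implicit, and it isolates exactly the inequality that the design of $\Rcm$ must satisfy. Your observation that the unsquared Sarrus priorities give only $4-(1+3)=0$ on the $(3,1)$ edge, so that squaring is what makes the third gap strict, is a genuine insight not stated in the paper and explains why the reward is taken to be $\prm_{ij}^2$ rather than $\prm_{ij}$. Your explicit continuity/band argument for passing from the instantaneous MIP guarantee to the time-invariant safety claim is also slightly more careful than the paper's one-line reduction. No gaps.
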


\begin{proof}
It suffices to show that $0 \le \sv_{12}, \sv_{23}, \sv_{31} \le \st$ at $t = t_0$ implies $\sv_{12}, \sv_{23}, \sv_{31} \ge 0 ~ \forall t > t_0$.

Suppose $0 \le \sv_{12} \le \st$. Then $\rcm_{12} = 36$. From the objective of \eqref{eq:baseMIP}, $\ulo_{12}$ would be chosen to be $1$ unless another feasible solution in which $\ulo_{12} = 0$ results in a higher objective value. Due to (\ref{eq:baseMIP}a) and (\ref{eq:baseMIP}b), the only way for the optimal solution to have $\ulo_{12} = 0 $ is to have $\ulo_{13} = 1$ or $\ulo_{21} = 1$.

There are several cases of $\Rcm$ to go through, with each case having different elements of $\Rcm$ being equal to $-1$. We show one case here; the rest of the cases follow a similar logic. Assume $\Rcm$ is given in \eqref{eq:Rcmfull}. Then, since $\rcm_{ij} > 0 ~ \forall i,j,i\neq j$, the optimal solution would have as many elements of $\Ulo$ being $1$ as possible (except for diagonal elements).

Suppose $\ulo_{21} = 1$, then by (\ref{eq:baseMIP}a), $\ulo_{12} = 0$ and by (\ref{eq:baseMIP}b), $\ulo_{23} = 0$. This leaves us with the freedom to choose $\ulo_{13}, \ulo_{31}, \ulo_{32}$. Since $\Rcm_{31} = 16 > 9 + 1 = \Rcm_{13} + \Rcm_{32}$, choosing $\ulo_{31} = 1, \ulo_{13} = \ulo_{32} = 0$ would maximize the objective. This gives us the candidate solution $\ulo_{21}=1, \ulo_{31} = 1$ and the rest of the $\ulo_{ij}$ being $0$, with an objective value of $4 + 16 = 20$. However, choosing $\ulo_{12} = 1$ alone would already result in an objective value of at least $\Rcm_{12} = 36$; therefore, $\ulo_{21} \neq 1$.

Next, suppose $\ulo_{13} = 1$, then by (\ref{eq:baseMIP}a), $\ulo_{12} = 0$ and by (\ref{eq:baseMIP}b), $\ulo_{31} = 0$. This leaves us with the freedom to choose $\ulo_{21}, \ulo_{23}, \ulo_{32}$. Since $\Rcm_{23} = 25 > 4 + 1 = \Rcm_{21} + \Rcm_{32}$, choosing $\ulo_{23} = 1, \ulo_{21} = \ulo_{32} = 0$ would maximize the objective. This gives us the candidate solution $\ulo_{13}=1, \ulo_{23} = 1$ and the rest of the $\ulo_{ij}$ being $0$, with an objective value of $9 + 25 = 34$. However, choosing $\ulo_{12} = 1$ alone would already result in an objective value of at least $\Rcm_{12} = 36$; therefore, $\ulo_{13} \neq 1$.

This leaves us with $\ulo_{12} = 1$ whenever $0 \le \sv_{12} \le \st$. By a similar argument, one can show that $\ulo_{23} = 1$ whenever $0 \le \sv_{23} \le \st$, and $\ulo_{31} = 1$ whenever $0 \le \sv_{31} \le \st$.
\end{proof}

\begin{rmk}
Alternatively, one could enumerate all feasible solutions for every possible choice of $\Rcm$, and discover the same result stated in Theorem \ref{thm:main_result}. We have also taken this brute force approach to verify the above proof.
\end{rmk}

\begin{corr}
By Theorem \ref{thm:main_result} and Corollary \ref{corr:pw_safety}, if $N = 3$ and each vehicle $\veh_i$ uses the optimal pairwise safety controller $u_{ij}^*$ with respect to $\veh_j$ whenever $\ulo_{ij} = 1$, then no vehicle will ever get into another vehicle's danger zone.
\end{corr}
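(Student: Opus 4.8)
The plan is to show that, in the regime that actually threatens safety, the MIP \eqref{eq:baseMIP} always returns the single \emph{cyclic} logic matrix $\ulo_{12}=\ulo_{23}=\ulo_{31}=1$ (all other off-diagonal entries zero), and then to close the argument with the pairwise guarantee. First I would justify the reduction asserted at the start of the proof, namely that it suffices to treat $0\le\sv_{12},\sv_{23},\sv_{31}\le\st$: since the relative dynamics \eqref{eq:rdyn} are continuous, any cyclic safety level can only reach $0$ by first descending through the potential-conflict band $[0,\st]$, so it is enough to certify that the control logic prevents a crossing while a level lies in this band. Exactly in this band the reward matrix takes the explicit form \eqref{eq:Rcmfull}, with the cyclic rewards $\rcm_{12}=36,\ \rcm_{23}=25,\ \rcm_{31}=16$ dominating the anti-cyclic rewards $\rcm_{13}=9,\ \rcm_{21}=4,\ \rcm_{32}=1$.

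The combinatorial core is to prove that the cyclic assignment is the \emph{unique} maximizer of \eqref{eq:baseMIP}. I would first check feasibility: each row of $\Ulo$ carries a single $1$, satisfying (\ref{eq:baseMIP}b), and for each pair only one direction is selected, satisfying (\ref{eq:baseMIP}a); its objective value is $36+25+16=77$. To establish optimality I would argue entrywise, e.g.\ that $\ulo_{12}=1$ at the optimum: if not, then (\ref{eq:baseMIP}a) and (\ref{eq:baseMIP}b) force either $\ulo_{13}=1$ or $\ulo_{21}=1$, and in each branch I propagate the two constraints to determine which remaining entries can still be set, then bound the best attainable objective. The decisive ingredient is the \emph{domination inequalities} produced by squaring the priorities, $\rcm_{23}>\rcm_{21}+\rcm_{32}$ and $\rcm_{31}>\rcm_{13}+\rcm_{32}$, which force each branch to an objective strictly below the value $\rcm_{12}=36$ obtainable by simply setting $\ulo_{12}=1$; hence $\ulo_{12}=1$. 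Because $\Rcm$ is invariant under the relabeling $1\to2\to3\to1$, the same argument yields $\ulo_{23}=1$ and $\ulo_{31}=1$.

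With the cyclic logic matrix fixed, every cyclic pair has a designated avoider, namely $\veh_1$ avoids $\veh_2$, $\veh_2$ avoids $\veh_3$, and $\veh_3$ avoids $\veh_1$, while each vehicle defends only one pair, consistent with (\ref{eq:baseMIP}b). Applying the pairwise controls $u_{12}^*,u_{23}^*,u_{31}^*$ and invoking Proposition~1 (extended to the boundary $\sv_{ij}=0$, where the optimal control keeps the relative state out of $\reachset_{ij}$) then keeps each of $\sv_{12},\sv_{23},\sv_{31}$ nonnegative for all $t>t_0$, which is the assertion of the theorem, and Corollary~\ref{corr:pw_safety} translates this into collision avoidance.

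The main obstacle I anticipate is not this clean all-in-conflict case but the bookkeeping over the partial-conflict cases: when some cyclic level exceeds $\st$, its reward switches from $\rcm_{ij}=\prm_{ij}^2$ to $\rcm_{ij}=-1$, altering $\Rcm$ and possibly the optimizer. I would need to verify, case by case over which entries equal $-1$, that the MIP still assigns an avoider to every pair genuinely in conflict, so that no in-conflict level is left undefended, while pairs with $\sv_{ij}>\st$ retain a margin that continuity preserves over the instant. The squared-priority gaps are chosen precisely so that the required domination inequalities survive in each such sub-case, which is also why the brute-force enumeration noted in the following remark reproduces the same conclusion.
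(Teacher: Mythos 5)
Your argument is correct and follows the paper's own route: you inline the proof of Theorem~\ref{thm:main_result} (cyclic optimality of the MIP via the domination inequalities $25>4+1$ and $16>9+1$, with the partial-conflict reward patterns deferred to a case check, exactly as the paper does) and then close with Corollary~\ref{corr:pw_safety} and the boundary extension of Proposition~1. The one imprecision is your claim that $\Rcm$ is invariant under the cyclic relabeling --- the entries $36,25,16$ are not, so the standalone bound that works for $\ulo_{12}$ must be replaced by a comparison against the full cyclic value $77$ when arguing $\ulo_{23}=1$ (the feasible assignment $\ulo_{32}=\ulo_{12}=1$ already attains $37>25$) --- but this is the same gloss the paper makes with ``by a similar argument,'' and both rest ultimately on the brute-force enumeration noted in the remark.
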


% Proof for n=2 and 3 (1p)

%\input{performance.tex}
% Performance considerations for more vehicles (0.5-1p)
%% Direction of avoidance
%% Time to target

%\input{performance_proof.tex}

% !TEX root = MIP.tex
% \usepackage{subcaption}
\section{Numerical Simulations}
In this section, we illustrate our proposed method through simulations and compare our method with a baseline pairwise method that uses solely the HJ pairwise optimal control solution in which each agent $Q_{i}$ avoids the agent $Q_{j}$ in the potential conflict set $\pcs_{i}$ with the smallest pairwise safety value $s_{ij}$. Compared with our MIP formulation (\ref{eq:baseMIP}), the baseline can be thought of as a different MIP that
\begin{itemize}
  \item omits constraint ($\ref{eq:baseMIP}$a), making the vehicles unable to coordinate among each other, and
  \item assumes $\forall i, c_{ij}=1$ if $\veh_i$ has the lowest safety value with respect to $\veh_j$, and $c_{ij}=-\infty$ otherwise, making the vehicles lack a notion of global avoidance priority.
%  \item does not account for scenarios where a vehicle can avoid multiple vehicles.
\end{itemize}

Such a baseline is chosen to illustrate the benefits of the above design considerations, which are important features of our proposed method. For illustration purposes, we assumed that the dynamics of each vehicle is given by
\vspace{-0.5em}
\begin{equation}
\label{eq:dyn_i}
\begin{aligned}
\dot{p}_{x,i} &= v \cos \theta_i \\
\dot{p}_{y,i} &= v \sin \theta_i \\
\dot{\theta}_i &= \omega_i, \quad |\omega_i| \le \bar{\omega}
\end{aligned}
\vspace{-0.5em}
\end{equation}

\noindent where the state variables $p_{x,i}, p_{y,i}, \theta_i$ represent the $x$ position, $y$ position, and heading of vehicle $\veh_i$. Each vehicle travels at a constant speed of $v=5$, and chooses its turn rate $\omega_i$, constrained by some maximum $\bar{\omega} = 1$. The danger zone for HJ computation between $\veh_i$ and $\veh_j$ is defined as
\begin{equation}
\targetset_{ij} = \{x_{ij}: (p_{x,i} - p_{x,j})^2 + (p_{y,i} - p_{y,j})^2 \le R_c^2\},
\end{equation}

\noindent whose interpretation is that $\veh_i$ and $\veh_j$ are considered to be in each other's danger zone if their positions are within $R_c$ of each other. In our examples, we chose $R_c = 5$. Here, $x_{ij}$ represents their joint state, $x_{ij} = (p_{x,i} - p_{x,j}, p_{y,i} - p_{y,j}, \theta_i - \theta_j)$ For notational convenience, we define $p_{x, ij} \equiv p_{x,i}-p_{x,j}$, $p_{y, ij} \equiv p_{y,i} - p_{y, j}$, and $\theta_{ij} \equiv \theta_i - \theta_j$.

To obtain safety levels and the optimal pairwise safety controller, we compute the BRS \eqref{eq:brs} with the relative dynamics
\vspace{-0.5em}
\begin{equation}
  \label{eq:dyn_ij}
  \begin{aligned}
  \dot{p}_{x, ij} &=  -v + v \cos \theta_{ij} + \omega_i p_{y, ij} \\
  \dot{p}_{y, ij} &= v \sin \theta_{ij} - \omega_i p_{x, ij} \\
  \dot{\theta}_{ij} &= \omega_j - \omega_i, \quad |\omega_i|, |\omega_j| \le \bar{\omega}
  \end{aligned}
  \vspace{-0.5em}
\end{equation}

% \begin{equation}
% \label{eq:dyn_ij}
% \dot{x}_{ij} = f_{ij}(x_{ij}, \omega_i, \omega_j) level =
% \begin{bmatrix}
% -v + v \cos x_{ij}^3 + \omega_i x_{ij}^2 \\
% v \sin x_{ij}^3 - \omega_i x_{ij}^1 \\
% \omega_j - \omega_i
% \end{bmatrix}
% \end{equation}

In our examples, we chose $\st=1.5$. Whenever $\sv_{ij} > \st ~ \forall j$, $\veh_i$ applies the optimal control to reach its destination\footnote{This optimal control can be computed by solving a reachability problem using the dynamics \eqref{eq:dyn_i}, but for brevity we will not go into the details here.}. Otherwise, $\veh_i$ uses the control specified by the joint cooperative safety controller that we propose in this paper.

Simulations for $N=3$ and $N=8$ are presented in detail for our method and the baseline method. Each vehicle aims to reach the circular target of matching color while avoiding other vehicles' danger zones. The vehicles keep traveling at constant speed even if they enter the danger zones of other vehicles until they reach their targets. The $s_{ij} = 0, \st$ safety level sets are plotted for some pairs of vehicles. When a vehicle is inside the $\st$ safety level set (outer boundary), plotted in the same color as the vehicle, it is in potential conflict with the vehicle around which the level set is plotted. However, as long as the vehicle stays outside of the $0$ safety level set (inner boundary), the pair of vehicles will be able to avoid entering each other's danger zones.

Fig. \ref{fig:our_3} illustrates how our joint collision avoidance method cooperatively resolves conflicts for three vehicles. The vehicles start outside of each others' $\st$ safety level sets. Each of them performs optimal control to reach their respective targets. On the way, $Q_2$ (green) and $Q_3$ (blue) come in conflict with each other. Cooperatively, $Q_2$ avoids $Q_3$ while $Q_3$ heads to the target since $Q_2$ is already resolving the pairwise conflict. At time $t=0.8$, all vehicles come in conflict with each other, and our proposed algorithm advises that $Q_{1}$ (red) avoids $Q_{2}$, $Q_{2}$ avoids $Q_{3}$, and $Q_3$ avoids $Q_1$, efficiently utilizing their control authorities for avoidance. At time $t=1.5$, the conflicts are resolved as each vehicle's safety level rises to above $\st=1.5$ with respect to the others. Eventually, all vehicles reach their targets without any entering each other's danger zones.

Fig. \ref{fig:naive_3} illustrates the pitfall of using the baseline method. Here, each vehicle avoids the vehicle with the smallest pairwise safety value. At $t = 0.6$, all vehicles come in conflict with each other, and without higher level logic, $Q_1$ (red) avoids $Q_3$ (blue), $Q_2$ (green) avoids $Q_1$, and $Q_3$ avoids $Q_1$. By avoiding each other, $Q_1$ and $Q_3$ waste control authority that can be used to prevent $Q_2$ and $Q_3$ from going closer to each other. When $Q_2$ and $Q_3$ come closer to each other, they begin avoiding each other, leading to $Q_1$ and $Q_3$  coming closer to each other. The lack of coordination causes this behavior to repeat, bringing them closer and closer together ($t=0.9$), and eventually leading them into each other's danger zones at $t=1.6$. This alternating avoidance behavior also highlights the importance of imposing avoidance priority.

Fig. \ref{fig:our_8} illustrates a difficult eight-vehicle scenario that our cooperative algorithm successfully resolves. The safety level sets are plotted for each avoidance pair. At $t = 2.7$, multiple vehicles are in conflict with each other. Notice that no redundant control is used (a pair of vehicles avoiding each other). Instead one vehicle in a given conflict pair can free up its control to avoid another agent. Fig. \ref{fig:naive_8} shows the result of applying the baseline approach, which is unable to resolve the multiple conflicts. In particular, at $t = 1.7$ (top right), multiple vehicle pairs avoid each other during the conflicts. In addition, at $t = 11.5$ (bottom right), two vehicles end up in a ``limbo'' state where they alternate between avoiding each other and trying to get closer to their targets, continually going in a direction that is \textit{further} from their targets.

\begin{figure}[]
\centering
  \begin{subfigure}[b]{0.19\textwidth}
    \includegraphics[width=\textwidth]{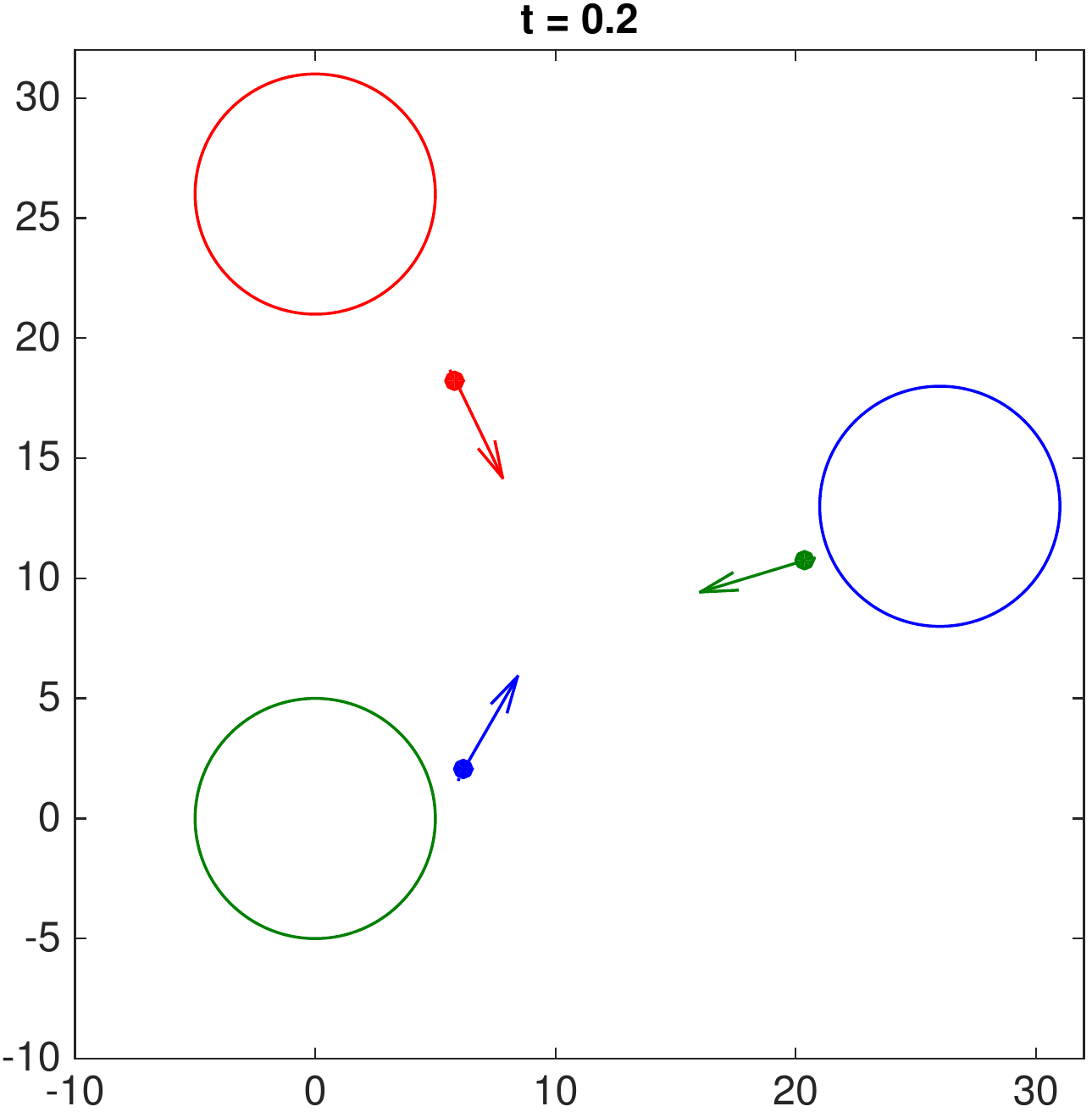}
  \end{subfigure}
  \begin{subfigure}[b]{0.19\textwidth}
    \includegraphics[width=\textwidth]{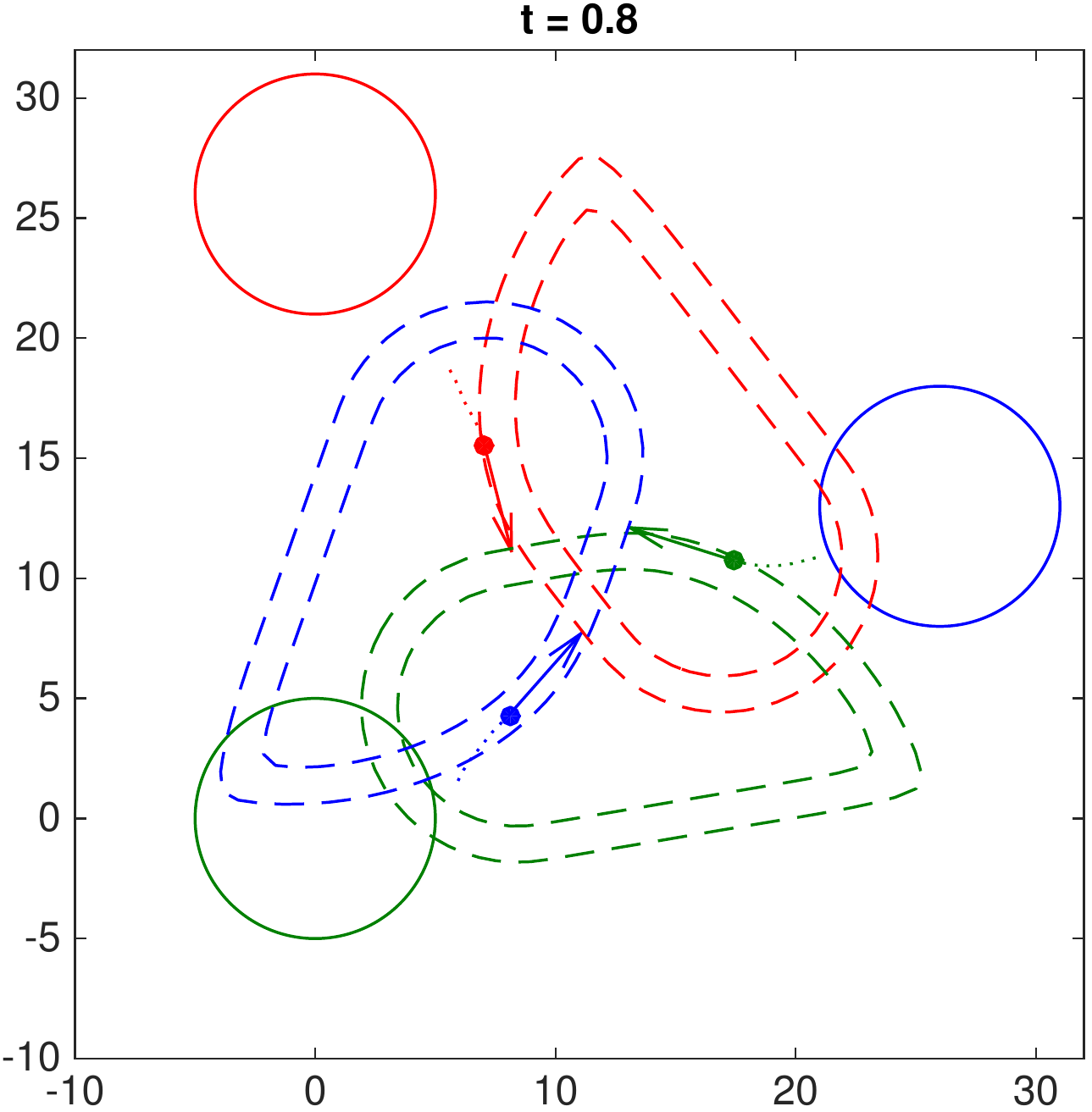}
  \end{subfigure}
  \\
  \begin{subfigure}[b]{0.19\textwidth}
    \includegraphics[width=\textwidth]{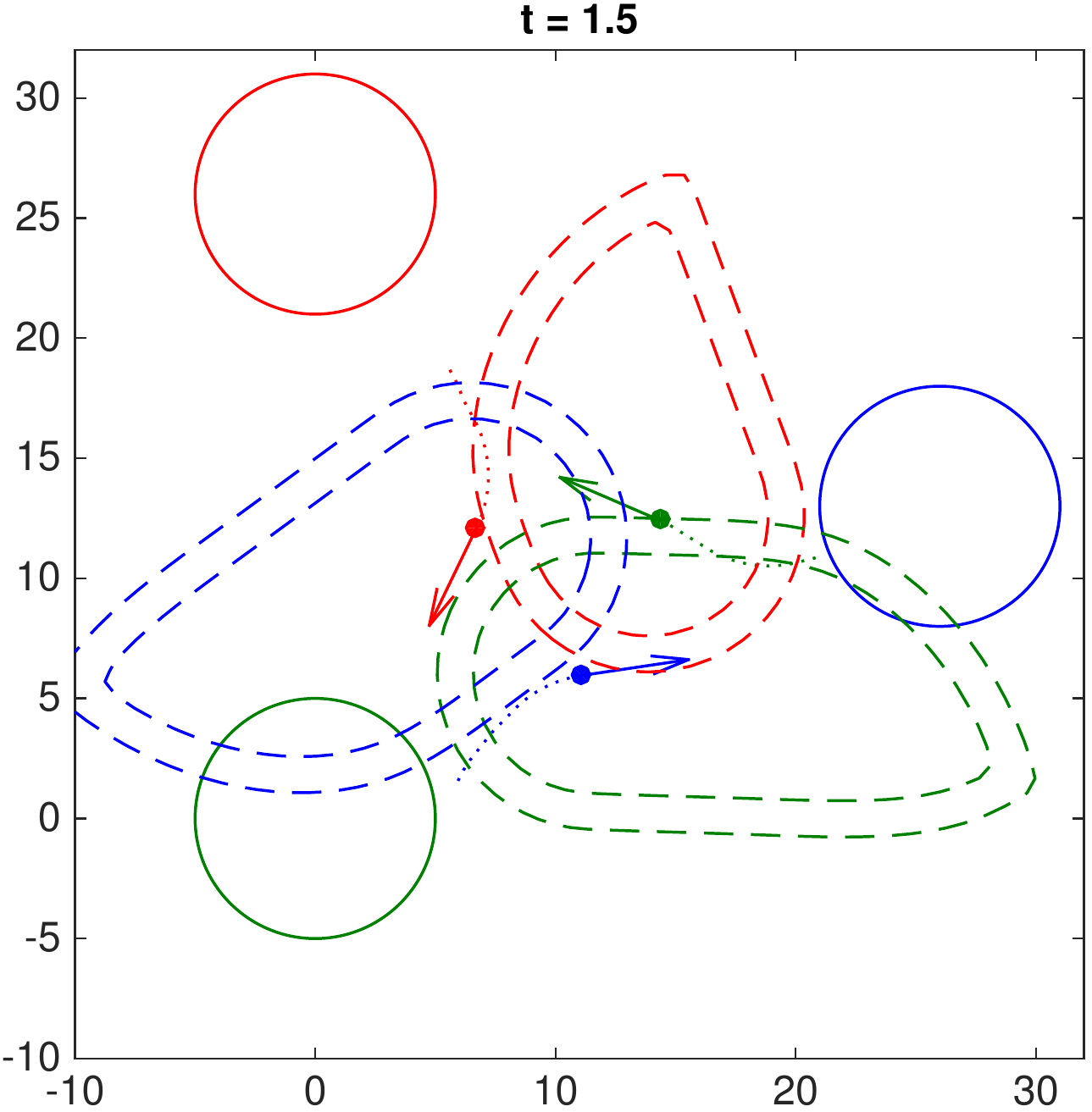}
  \end{subfigure}
  \begin{subfigure}[b]{0.19\textwidth}
    \includegraphics[width=\textwidth]{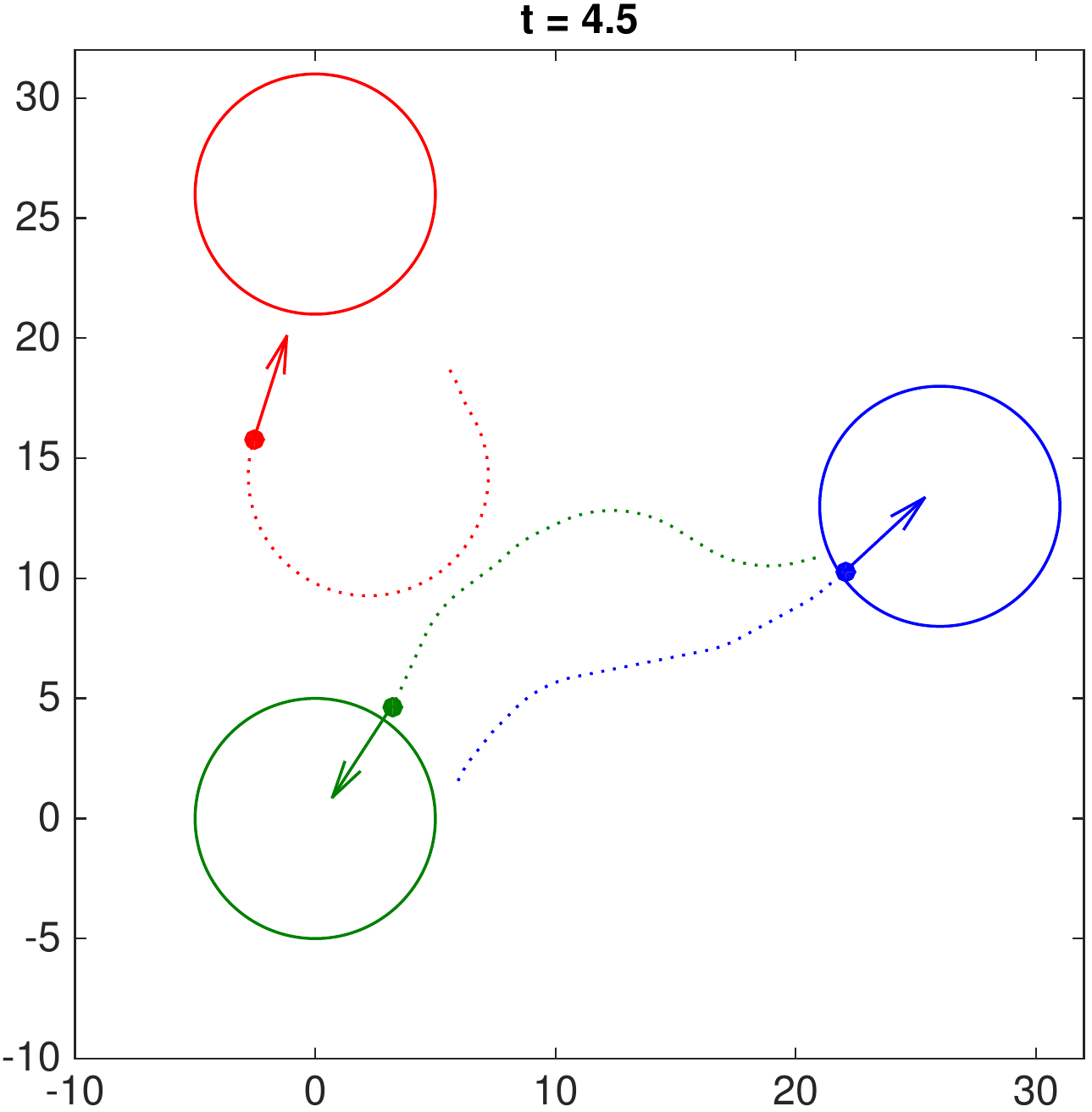}
  \end{subfigure}
  \\
  \begin{subfigure}[b]{0.15\textwidth}
    \includegraphics[width=\textwidth]{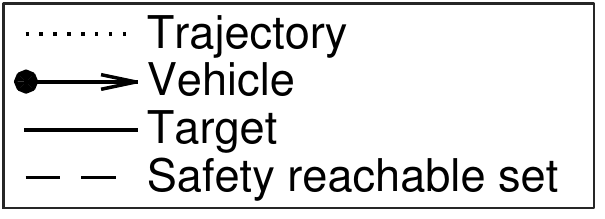}
  \end{subfigure}
  \caption{Three vehicles cooperatively resolve conflicts in a cyclic order, $Q_1$ (red) avoids $Q_2$ (green), $Q_2$ (green) avoids $Q_3$ (blue), and $Q_3$ (blue) avoids $Q_1$ (red).}
  \label{fig:our_3}
  \vspace{-2em}
\end{figure}

\begin{figure}[]
\centering
  \begin{subfigure}[b]{0.19\textwidth}
    \includegraphics[width=\textwidth]{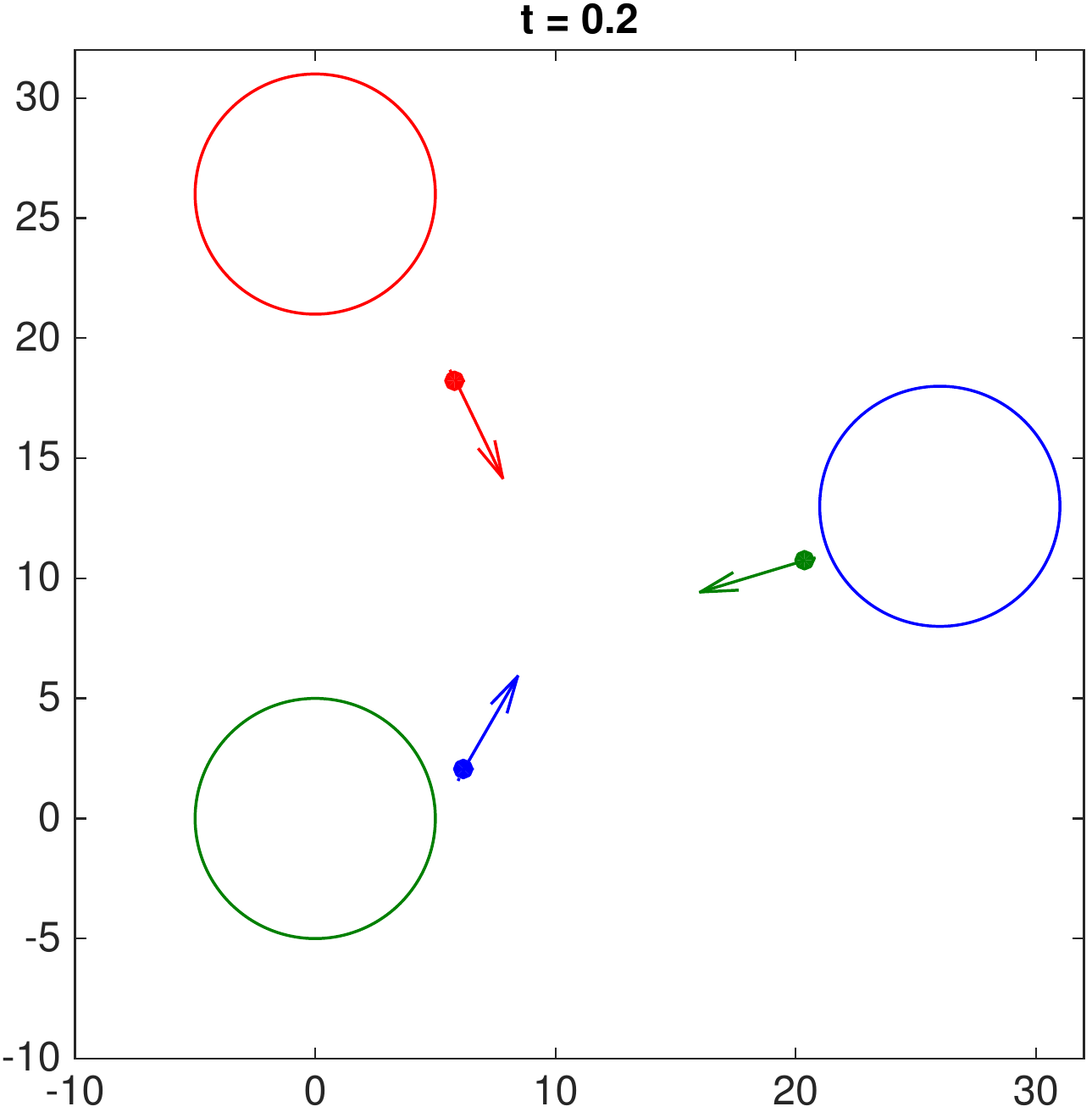}
  \end{subfigure}
  \begin{subfigure}[b]{0.19\textwidth}
    \includegraphics[width=\textwidth]{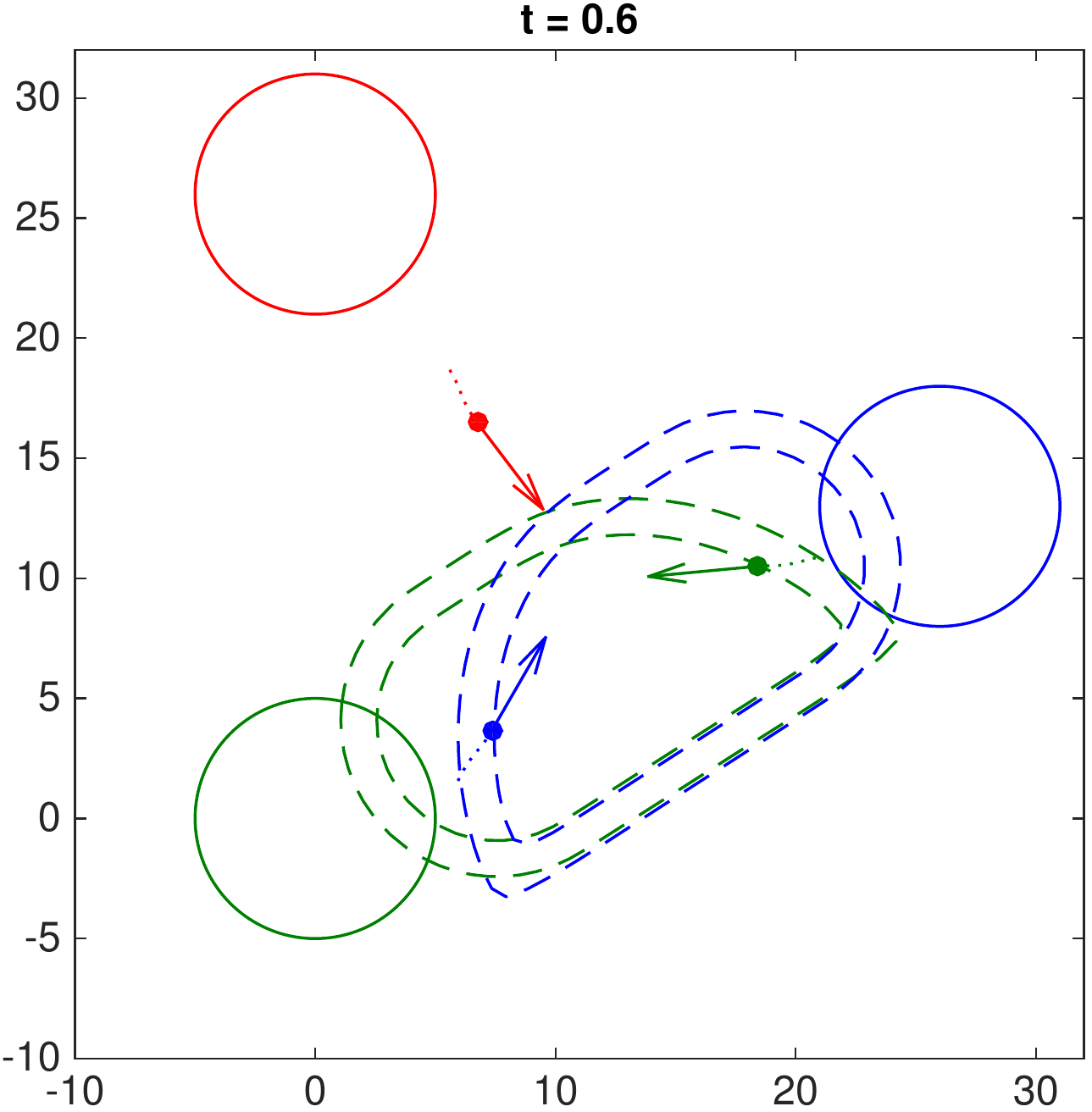}
  \end{subfigure}
  \\
  \begin{subfigure}[b]{0.19\textwidth}
    \includegraphics[width=\textwidth]{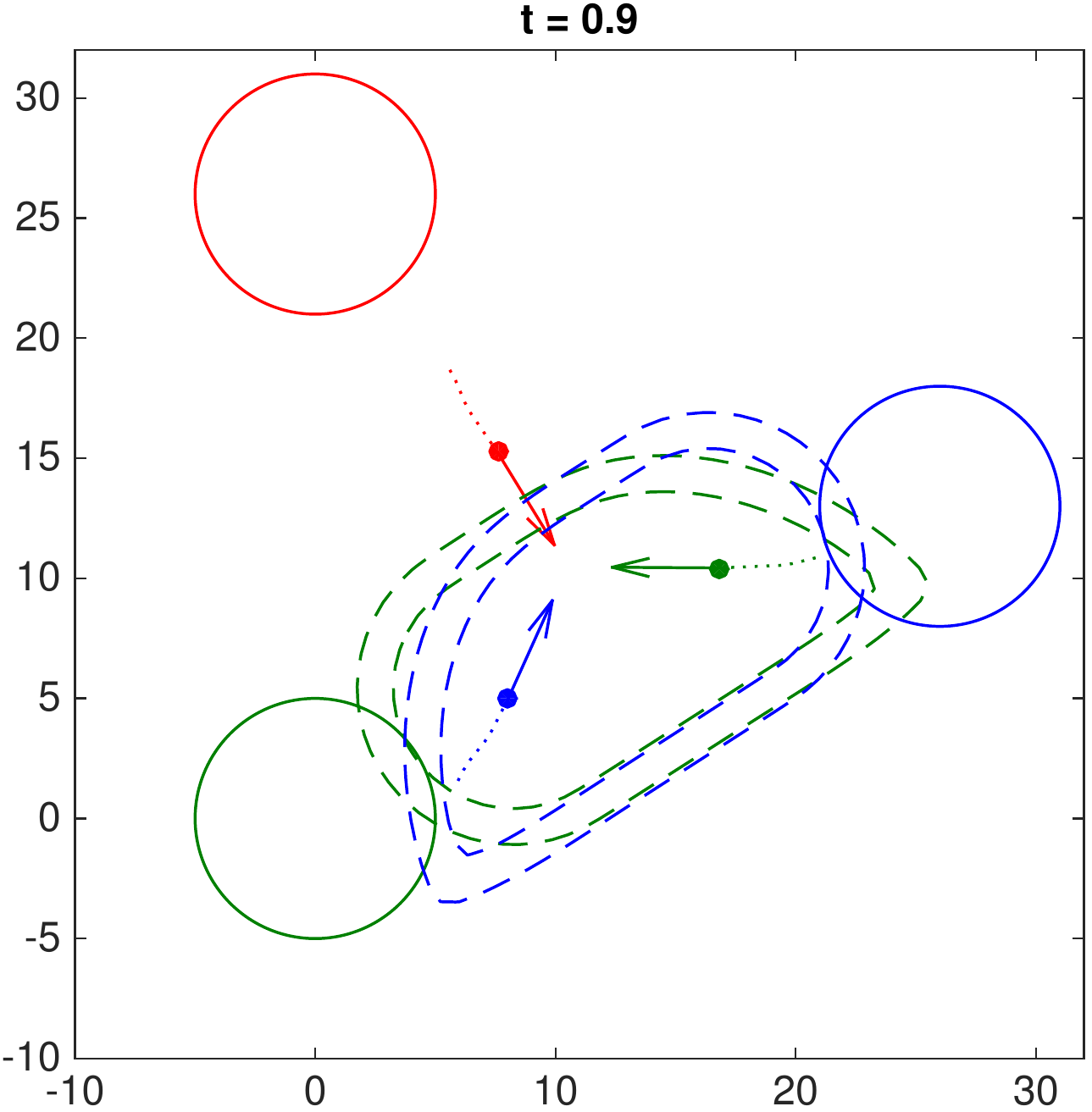}
  \end{subfigure}
  \begin{subfigure}[b]{0.19\textwidth}
    \includegraphics[width=\textwidth]{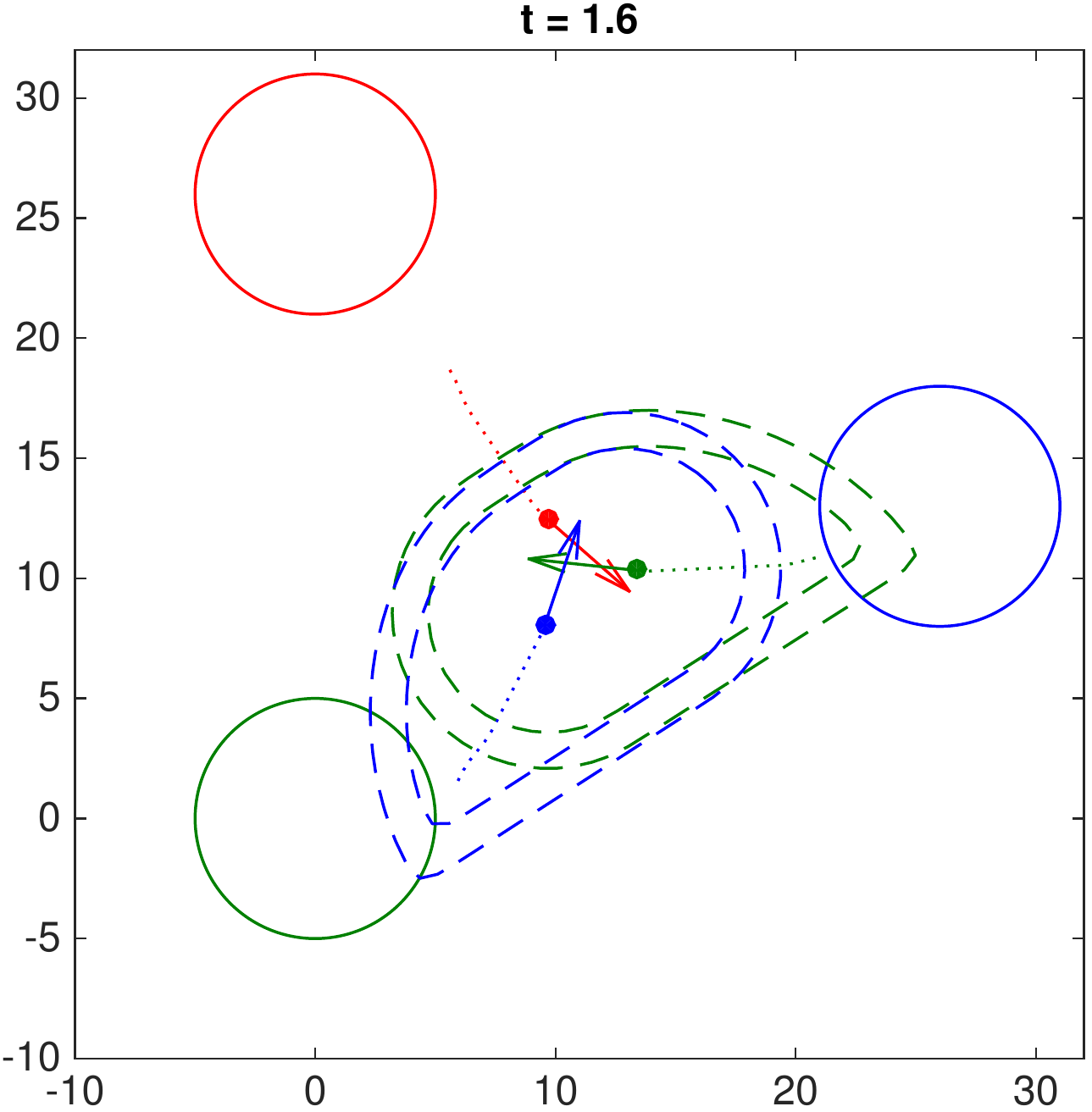}
  \end{subfigure}
  \caption{Without higher level control logic, the three vehicles are unable to resolve conflicts successfully. }
  \label{fig:naive_3}
  \vspace{-1em}
\end{figure}

\begin{figure}[]
\centering
  \begin{subfigure}[b]{0.21\textwidth}
    \includegraphics[width=\textwidth]{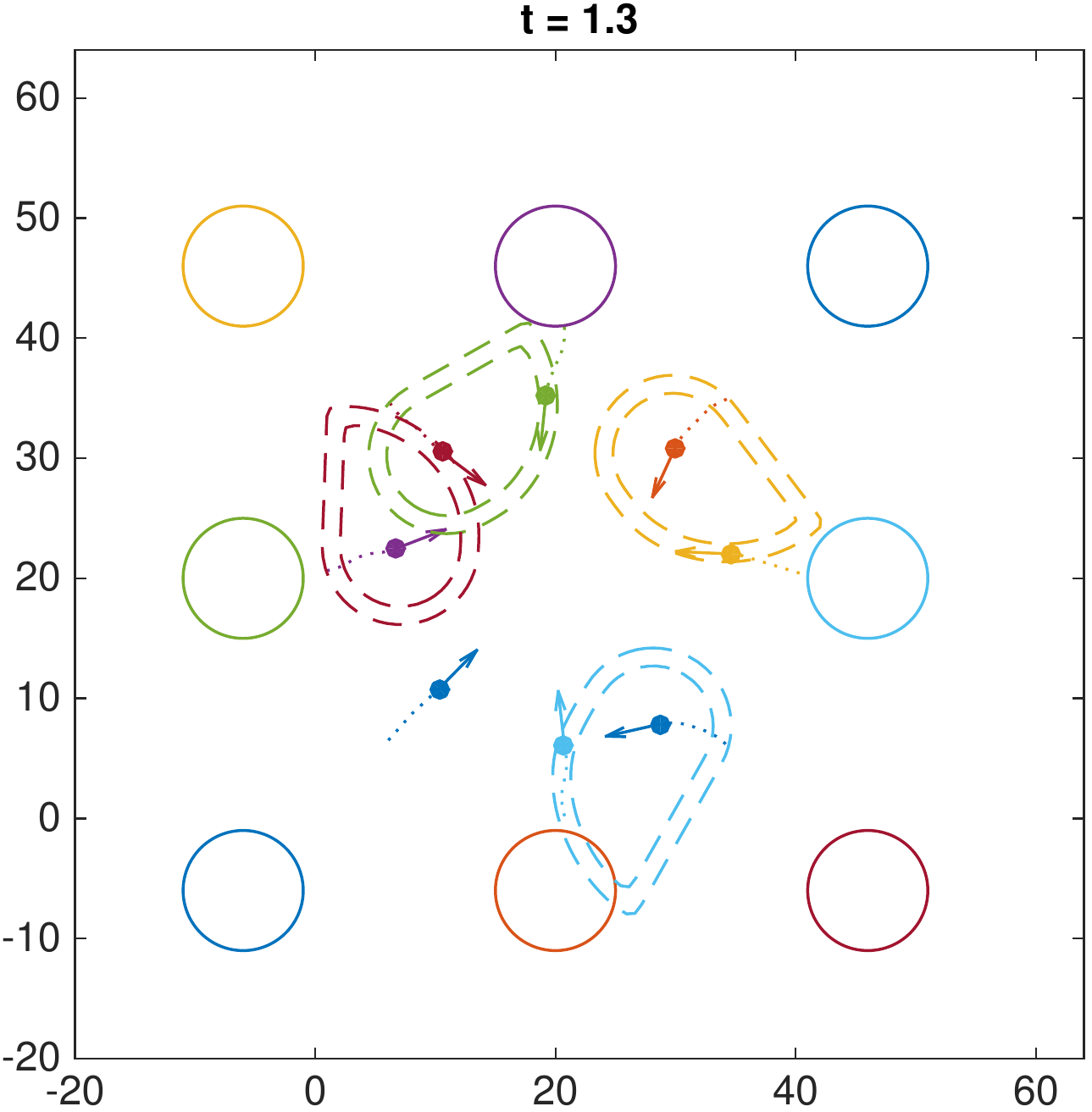}
  \end{subfigure}
  \begin{subfigure}[b]{0.21\textwidth}
    \includegraphics[width=\textwidth]{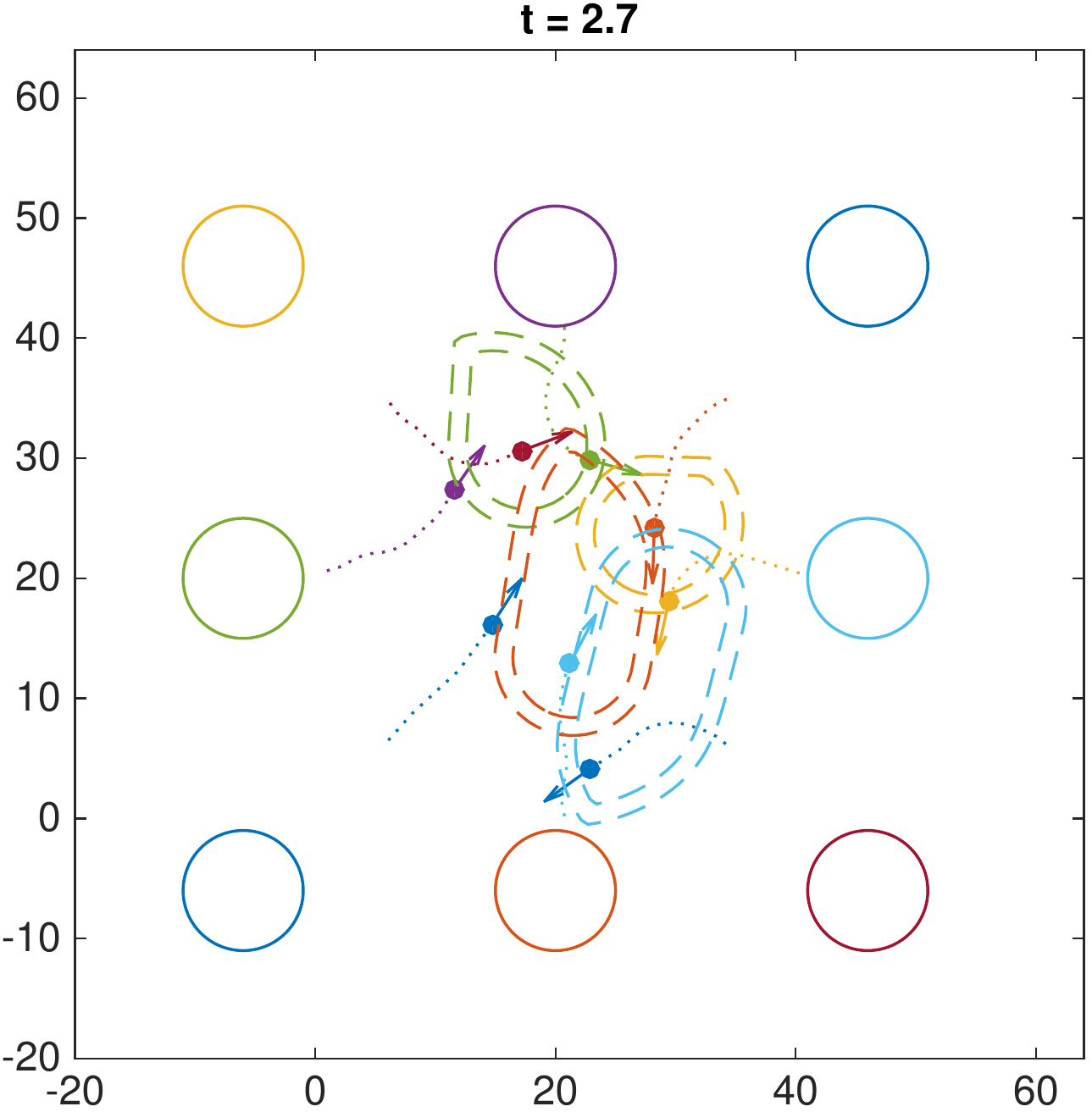}
  \end{subfigure}
  \\
  \begin{subfigure}[b]{0.21\textwidth}
    \includegraphics[width=\textwidth]{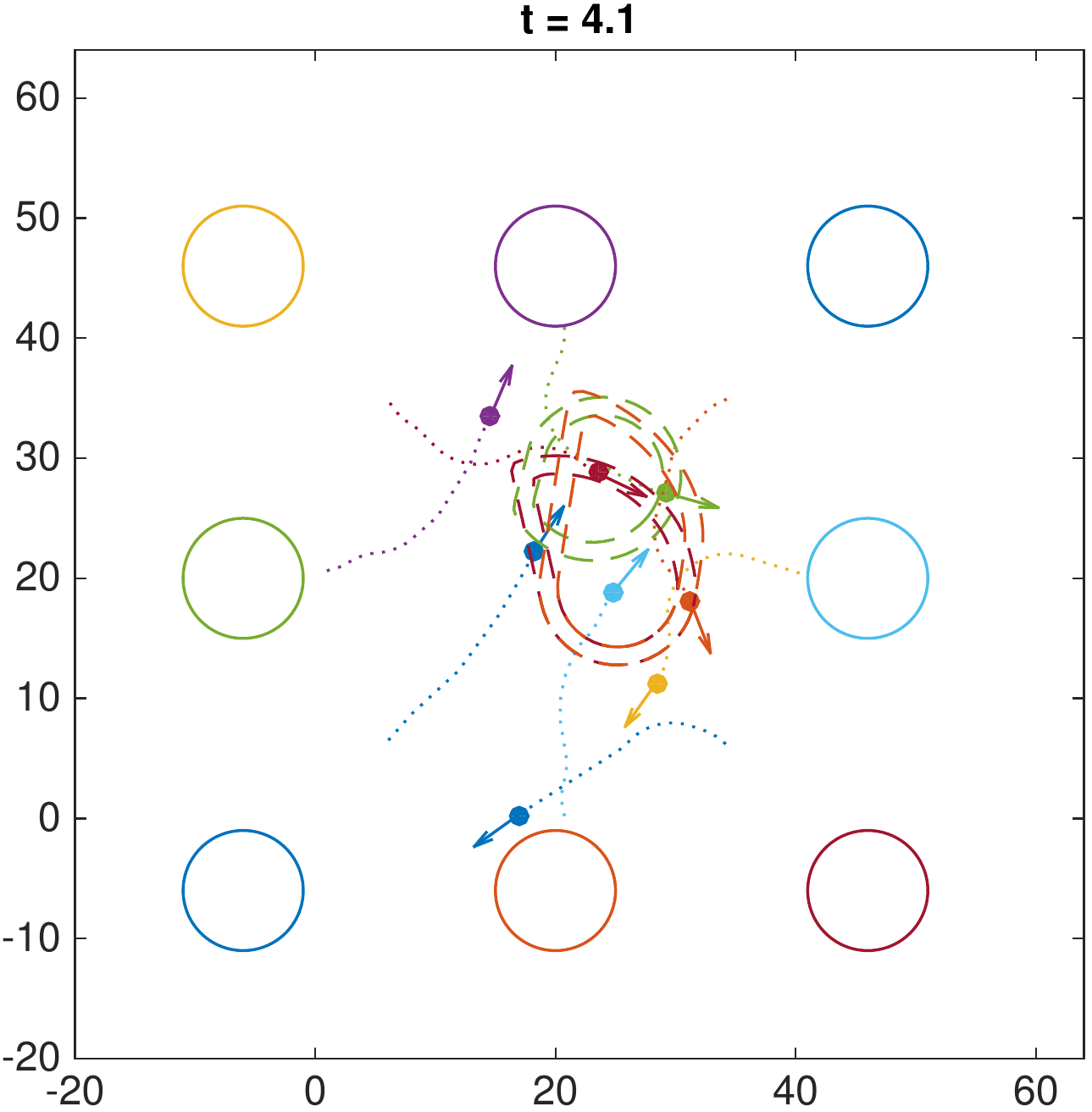}
  \end{subfigure}
  \begin{subfigure}[b]{0.21\textwidth}
    \includegraphics[width=\textwidth]{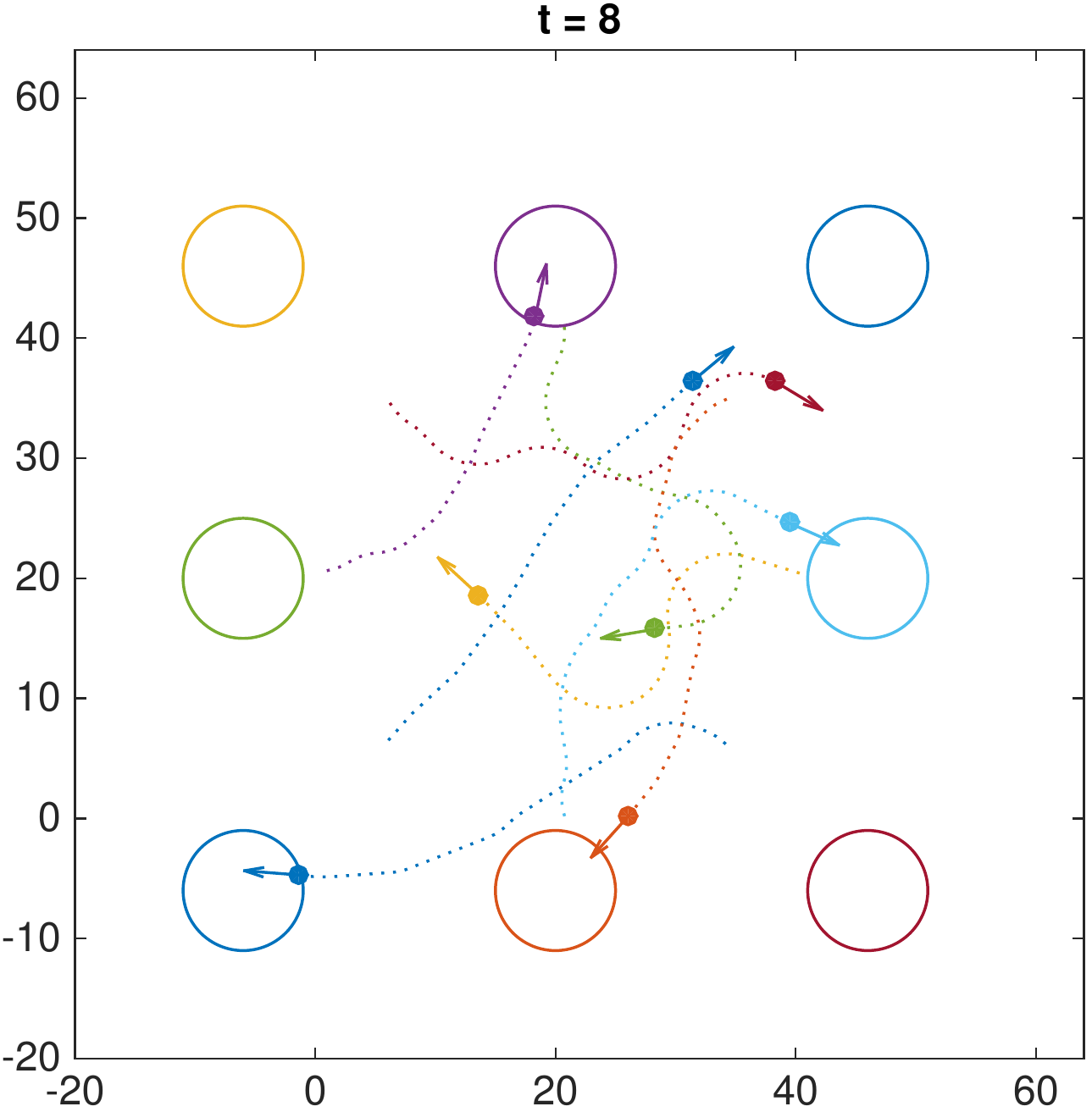}
  \end{subfigure}
  \caption{Eight vehicles successfully coordinated to resolve conflicts with our algorithm in this challenging scenario.}
  \label{fig:our_8}
\vspace{-2em}
\end{figure}

\begin{figure}[]
\centering
  \begin{subfigure}[b]{0.21\textwidth}
    \includegraphics[width=\textwidth]{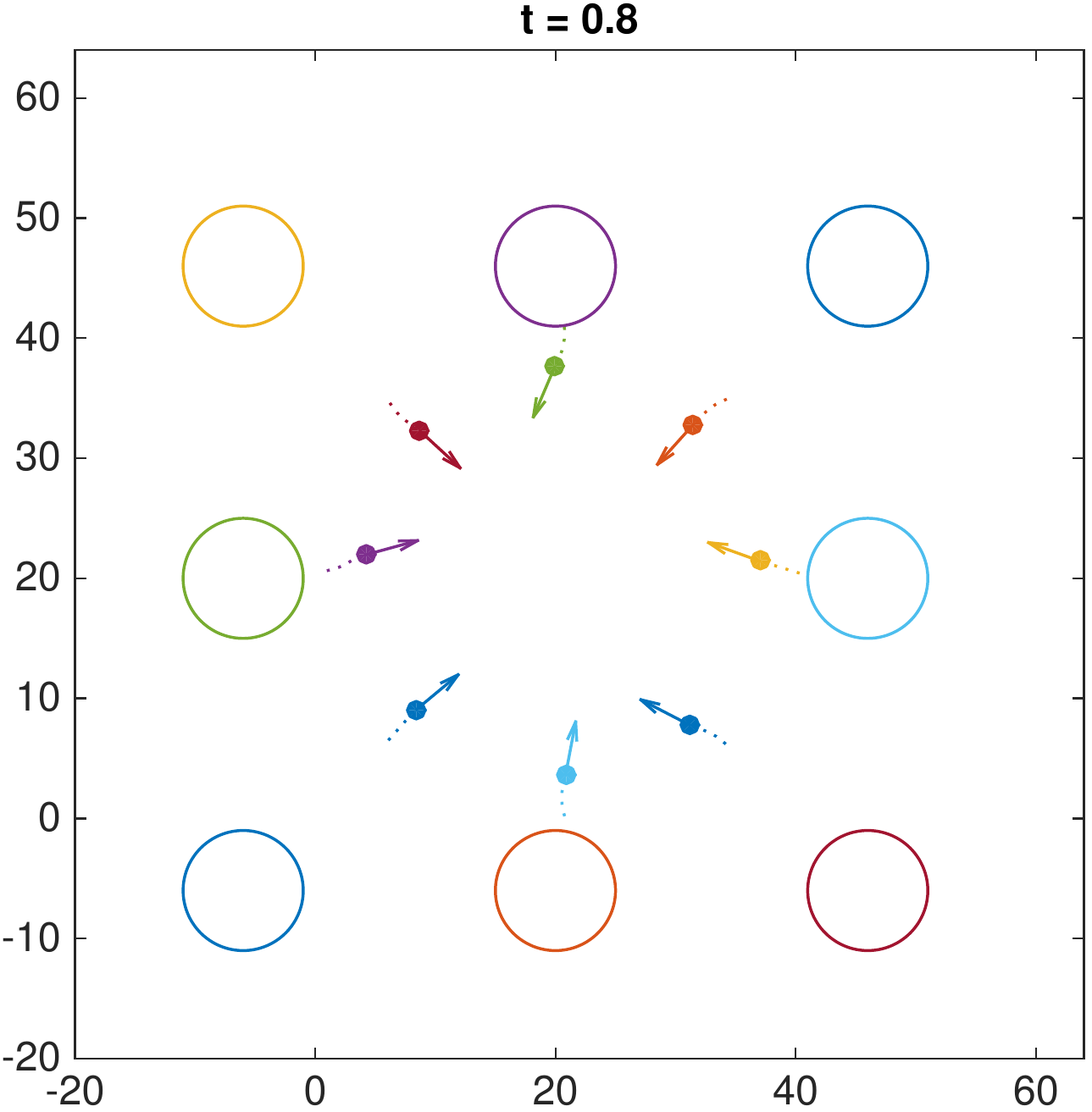}
  \end{subfigure}
  \begin{subfigure}[b]{0.21\textwidth}
    \includegraphics[width=\textwidth]{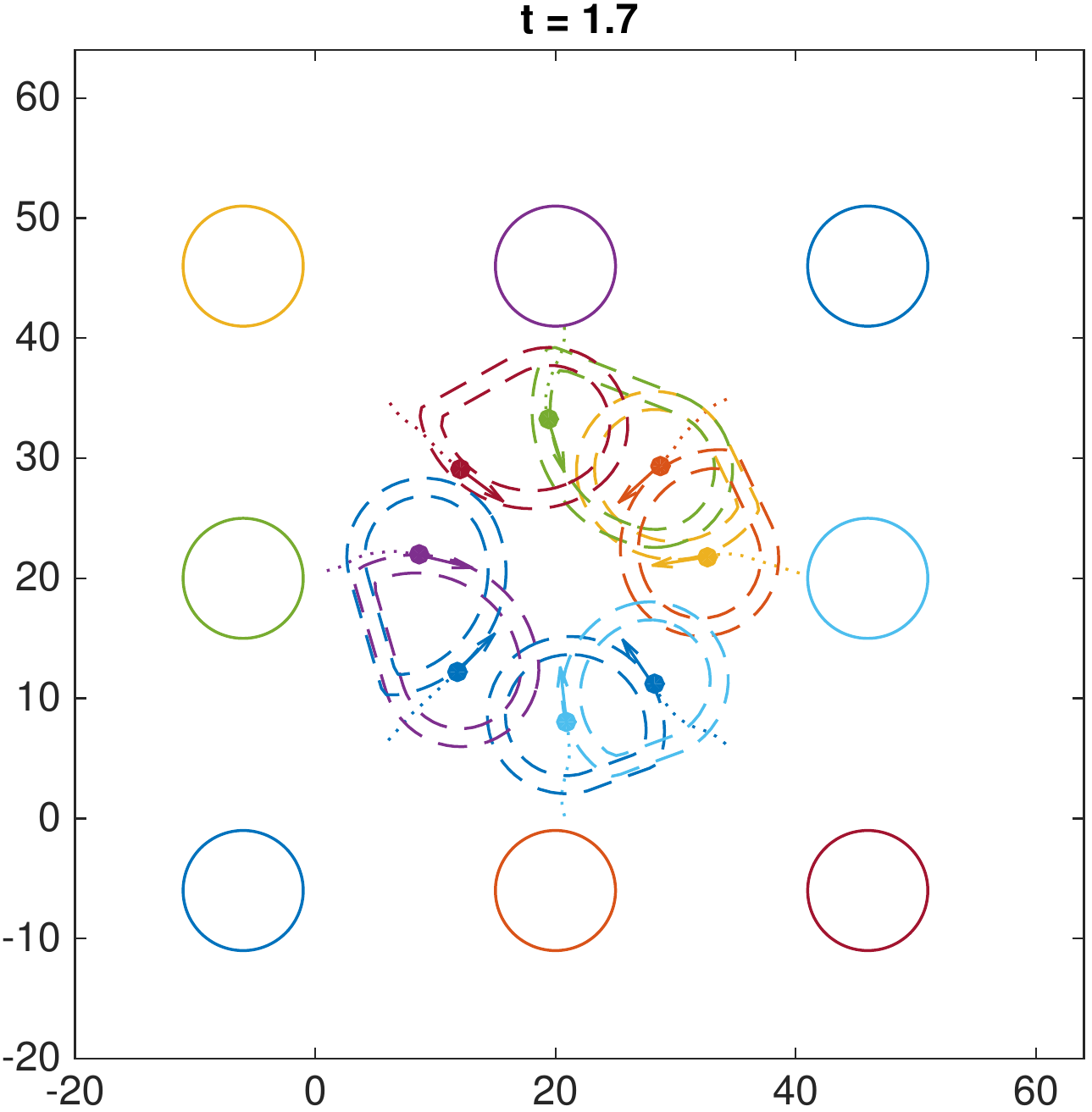}
  \end{subfigure}
  \\
  \begin{subfigure}[b]{0.21\textwidth}
    \includegraphics[width=\textwidth]{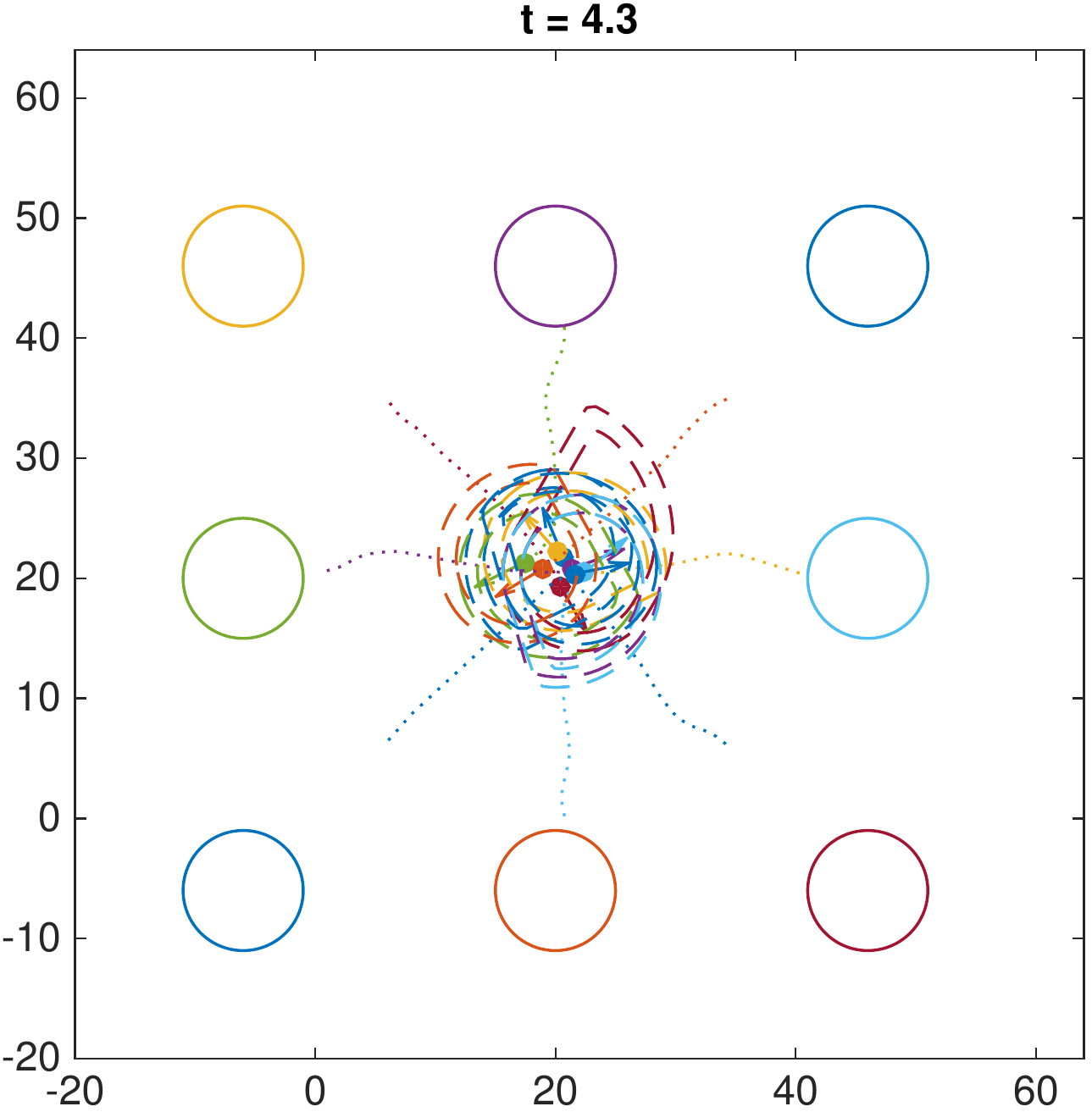}
  \end{subfigure}
  \begin{subfigure}[b]{0.21\textwidth}
    \includegraphics[width=\textwidth]{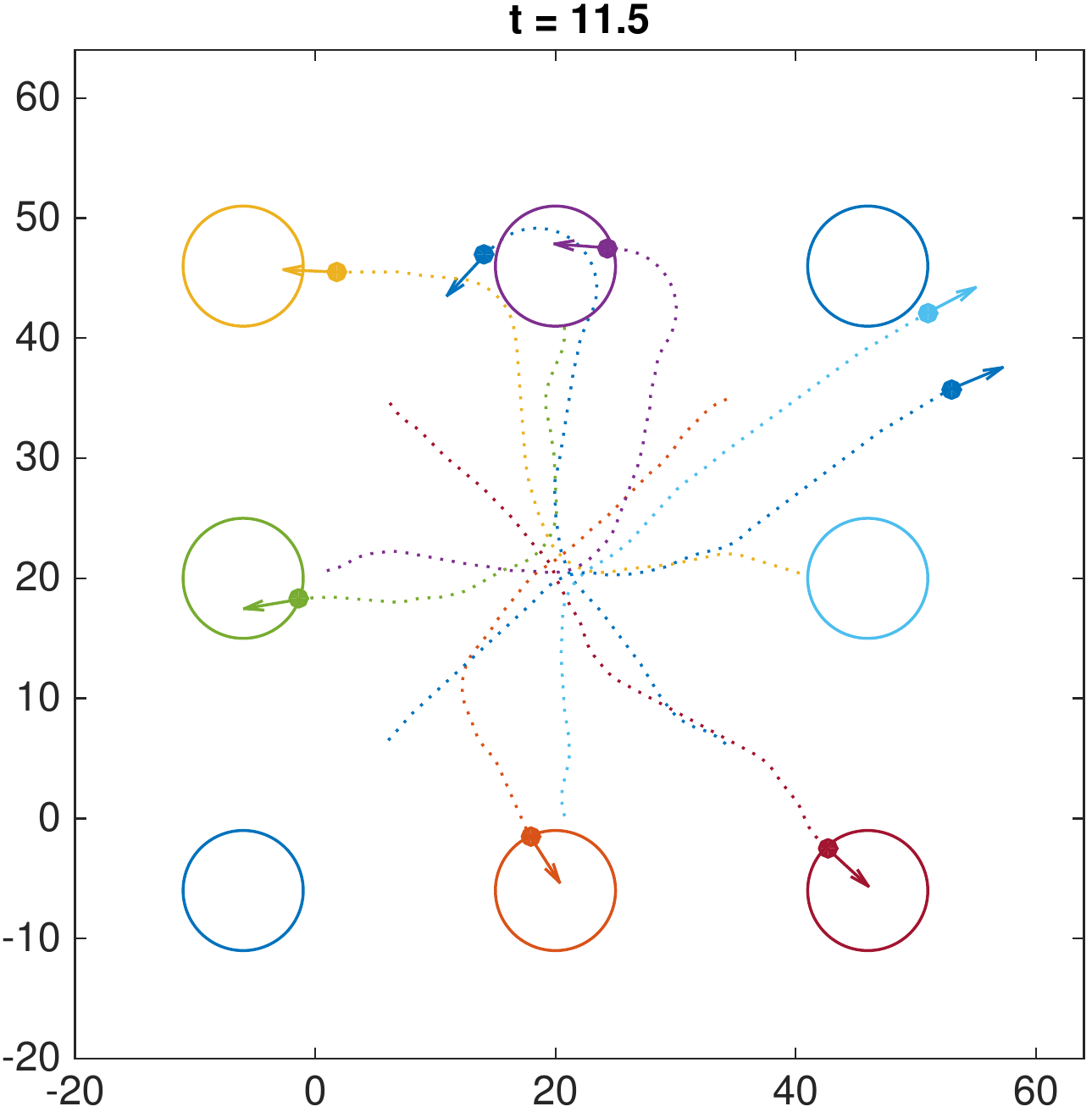}
  \end{subfigure}
  \caption{The lack of coordination using the baseline method results in failure in this challenging eight-vehicle scenario.}
  \label{fig:naive_8}
  \vspace{-1em}
\end{figure}

Additionally, we compare our method with the baseline method for $N=3,4,5,6,7,8$ vehicles 
by performing $200$ simulations with randomized initial conditions for each case, and show that our algorithm performs significantly better than the baseline pairwise approach. 
We initialized each vehicle by placing each of them symmetrically on a circle of radius $10 + 2 \times (N-3)$ facing the center of the circle, and then adding random perturbations to its initial state. We define the two performance metrics below. The average over the 200 trials for each case are presented in Fig. \ref{fig:simulation_combined}.

\begin{itemize}
  \item Success ratio = fraction of vehicles that reach their targets without ever entering others' danger zones
\item Aggregate conflict ratio = $\frac{\text{total \# of danger zone violations}}{\text{\# of time steps} \times C^{N}_{2}}$. The denominator is the maximum possible number of danger zone violations that could occur, which is the number of time steps times $C^{N}_2$ ($N$ choose $2$).%, the maximum number of danger zone violations in a given time step.
\end{itemize}

\begin{figure}[]
  \centering
  \includegraphics[width=0.35\textwidth]{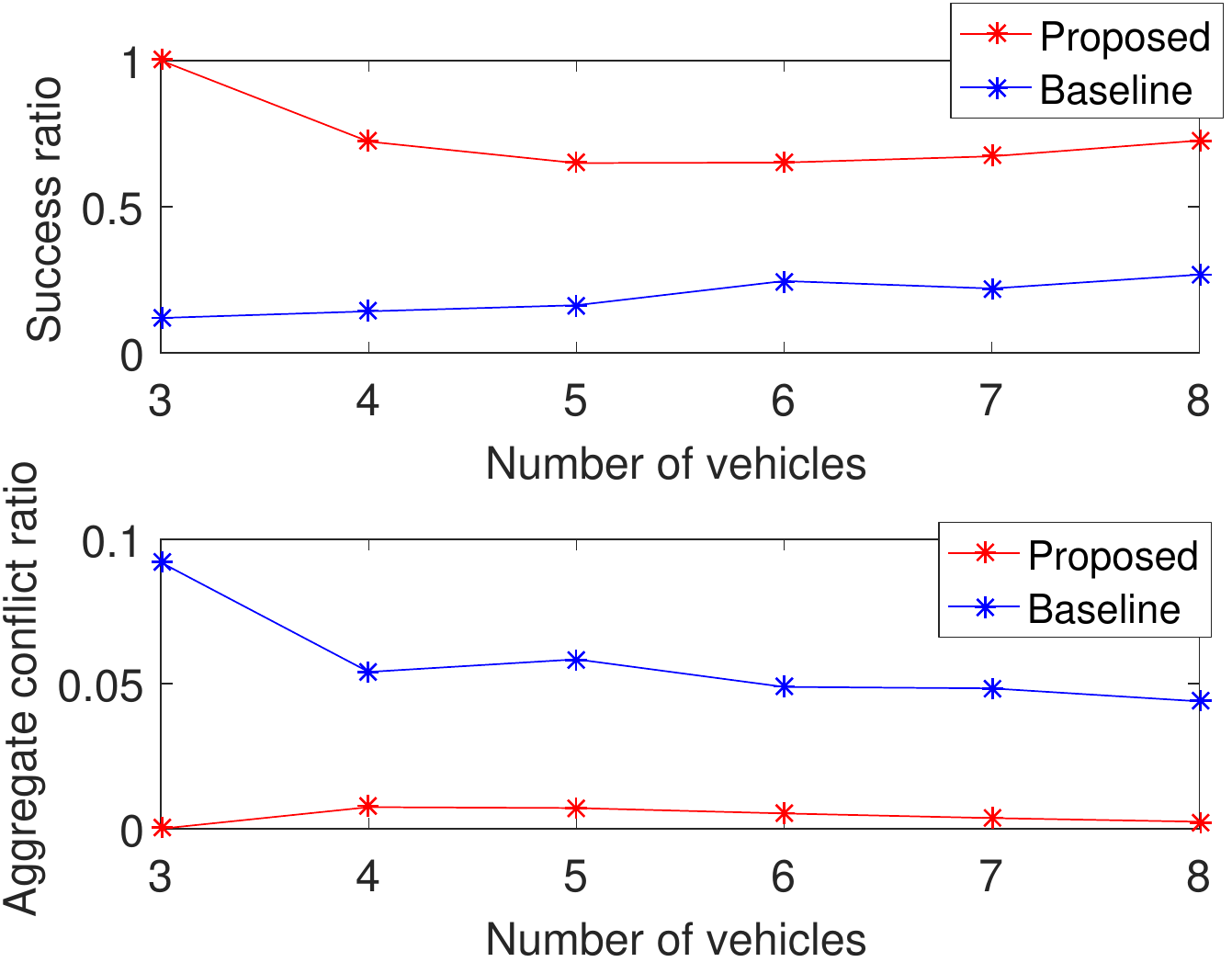}
  \caption{Our SAMV scheme outperforms the baseline method significantly in terms of success ratio and aggregate conflict ratio. In particular, we confirmed that for $N=3$, our method has a success ratio of $1.0$ and aggregate conflict ratio of $0.0$.}
  \label{fig:simulation_combined}
\vspace{-2em}
\end{figure}

With our proposed method, the average computation time per simulation is 4.1 seconds for $N=3$ and 25.5 seconds for $N=8$; this time includes the time needed to solve the MIP \eqref{eq:baseMIP}. With the baseline method, the average computation time for the same simulations is 5.9 seconds for $N=3$ and $22.9$ seconds for $N=8$. Both methods require the same BRS, which takes approximately 1 minute to compute. All computations were done on a MacBookPro 11.2 laptop with an Intel Core i7-4750 processor. 

% Numerical Simulations (1-2p)

% !TEX root = MIP.tex
\section{Conclusions}
By exploiting properties of pairwise optimal collision avoidance, our proposed mixed integer program method guarantees collision avoidance of three vehicle systems and performs well for larger multi-vehicle systems. %Future work includes guaranteeing safety for a larger number of vehicles and improving the performance for large numbers of vehicles.
\vspace{-0.5em}

% Conclusion (0.5p)

%%%%%%%%%%%%%%%%%%%%%%%%%%%%%%%%%%%%%%%%%%%%%%%%%%%%%%%%%%%%%%%%%%%%%%%%%%%%%%%%
%\addtolength{\textheight}{1cm}   % This command serves to balance the column lengths
                                  % on the last page of the document manually. It shortens
                                  % the textheight of the last page by a suitable amount.
                                  % This command does not take effect until the next page
                                  % so it should come on the page before the last. Make
                                  % sure that you do not shorten the textheight too much.

 \bibliographystyle{IEEEtran}
 \bibliography{references}

% Generated by IEEEtran.bst, version: 1.14 (2015/08/26)
\begin{thebibliography}{10}
\providecommand{\url}[1]{#1}
\csname url@samestyle\endcsname
\providecommand{\newblock}{\relax}
\providecommand{\bibinfo}[2]{#2}
\providecommand{\BIBentrySTDinterwordspacing}{\spaceskip=0pt\relax}
\providecommand{\BIBentryALTinterwordstretchfactor}{4}
\providecommand{\BIBentryALTinterwordspacing}{\spaceskip=\fontdimen2\font plus
\BIBentryALTinterwordstretchfactor\fontdimen3\font minus
  \fontdimen4\font\relax}
\providecommand{\BIBforeignlanguage}[2]{{%
\expandafter\ifx\csname l@#1\endcsname\relax
\typeout{** WARNING: IEEEtran.bst: No hyphenation pattern has been}%
\typeout{** loaded for the language `#1'. Using the pattern for}%
\typeout{** the default language instead.}%
\else
\language=\csname l@#1\endcsname
\fi
#2}}
\providecommand{\BIBdecl}{\relax}
\BIBdecl

\bibitem{Debusk10}
W.~M. Debusk, ``Unmanned aerial vehicle systems for disaster relief: Tornado
  alley,'' in \emph{Infotech@Aerospace Conferences}, 2010.

\bibitem{AUVSI16}
\BIBentryALTinterwordspacing
{AUVSI News}. (2016) Uas aid in south carolina tornado investigation. [Online].
  Available: \url{http://www.auvsi.org/blogs/auvsi-news/2016/01/29/tornado}
\BIBentrySTDinterwordspacing

\bibitem{Amazon16}
\BIBentryALTinterwordspacing
{Amazon.com, Inc.} (2016) Amazon prime air. [Online]. Available:
  \url{http://www.amazon.com/b?node=8037720011}
\BIBentrySTDinterwordspacing

\bibitem{BBC16}
\BIBentryALTinterwordspacing
{BBC Technology}. (2016) Google plans drone delivery service for 2017.
  [Online]. Available: \url{http://www.bbc.com/news/technology-34704868}
\BIBentrySTDinterwordspacing

\bibitem{Tice91}
B.~P. Tice, ``Unmanned aerial vehicles -- the force multiplier of the 1990s,''
  \emph{Airpower Journal}, 1991.

\bibitem{FAA13}
{Jointed Planning and Development Office (JPDO)}, ``Unmanned aircraft systems
  {(UAS)} comprehensive plan -- a report on the nation's {UAS} path forward,''
  Federal Aviation Administration, Tech. Rep., 2013.

\bibitem{NASA16}
\BIBentryALTinterwordspacing
{National Aeronautics and Space Administration}. (2016) Challenge is on to
  design sky for all. [Online]. Available:
  \url{http://www.nasa.gov/feature/challenge-is-on-to-design-sky-for-all}
\BIBentrySTDinterwordspacing

\bibitem{Fiorini98}
P.~Fiorini and Z.~Shillert, ``Motion planning in dynamic environments using
  velocity obstacles,'' \emph{International Journal of Robotics Research},
  vol.~17, pp. 760--772, 1998.

\bibitem{Vandenberg08}
J.~van~den Berg, M.~C. Lin, and D.~Manocha, ``Reciprocal velocity obstacles for
  real-time multi-agent navigation,'' in \emph{IEEE International Conference on
  Robotics and Automation}, May 2008, pp. 1928--1935.

\bibitem{Saber02}
R.~Olfati-Saber and R.~M. Murray, ``Distributed cooperative control of multiple
  vehicle formations using structural potential functions,'' in \emph{IFAC
  World Congress}, 2002.

\bibitem{Chuang07}
Y.-L. Chuang, Y.~Huang, M.~R. D'Orsogna, and A.~L. Bertozzi, ``Multi-vehicle
  flocking: Scalability of cooperative control algorithms using pairwise
  potentials,'' in \emph{IEEE International Conference onRobotics and
  Automation}, April 2007, pp. 2292--2299.

\bibitem{Vaisbord88}
E.~M. Vaisbord and V.~I. Zhukovskii, \emph{Introduction to Multi-player
  Differential Games and Their Applications}.\hskip 1em plus 0.5em minus
  0.4em\relax Routledge, 1988.

\bibitem{Mitchell05}
I.~Mitchell, A.~Bayen, and C.~Tomlin, ``A time-dependent {Hamilton-Jacobi}
  formulation of reachable sets for continuous dynamic games,'' \emph{IEEE
  Transactions on Automatic Control}, vol.~50, no.~7, pp. 947--957, 2005.

\bibitem{Fisac15}
J.~F. Fisac, M.~Chen, C.~J. Tomlin, and S.~S. Shankar, ``Reach-avoid problems
  with time-varying dynamics, targets and constraints,'' in \emph{18th
  International Conference on Hybrid Systems: Computation and Controls}, 2015.

\bibitem{Ding08}
J.~Ding, J.~Sprinkle, S.~S. Sastry, and C.~J. Tomlin, ``Reachability
  calculations for automated aerial refueling,'' in \emph{IEEE Conference on
  Decision and Control}, Cancun, Mexico, 2008.

\bibitem{Tanimoto78}
S.~Tanimoto, ``On a class of three-player differential games,'' \emph{Journal
  of Optimization Theory and Applications}, vol.~25, no.~3, p. 469?473, 1978.

\bibitem{Su14}
M.~Su, Y.~ji~Wang, and L.~Liu, ``Bounded guidance law based on differential
  game for three-player conflict,'' in \emph{IEEE Conference on Modeling,
  Identification, and Control}, 2014.

\bibitem{Fisac15b}
J.~F. Fisac and S.~S. Sastry, ``The pursuit-evasion-defense differential game
  in dynamic constrained environments,'' in \emph{IEEE Conference on Decision
  and Control}, 2015.

\bibitem{Lin13}
W.~Lin, ``Differential games for multi-agent systems under distributed
  information,'' Ph.D. dissertation, University of Central Florida, 2013.

\bibitem{Chen14}
M.~Chen, Z.~Zhou, and C.~J. Tomlin, ``Multiplayer reach-avoid games via low
  dimensional solutions and maximum matching,'' in \emph{Proceedings of the
  American Control Conference}, 2014.

\bibitem{Chen15}
M.~Chen, J.~Fisac, C.~J. Tomlin, and S.~Sastry, ``Safe sequential path planning
  of multi-vehicle systems via double-obstacle hamilton-jacobi-isaacs
  variational inequality,'' in \emph{European Control Conference}, 2015.

\bibitem{Chen15b}
M.~Chen, Q.~Hu, C.~Mackin, J.~Fisac, and C.~J. Tomlin, ``Safe platooning of
  unmanned aerial vehicles via reachability,'' in \emph{IEEE Conference on
  Decision and Control}, 2015.

\bibitem{Barron90}
E.~N. Barron, ``{Differential Games with Maximum Cost},'' \emph{Nonlinear
  analysis: Theory, methods \& applications}, pp. 971--989, 1990.

\bibitem{Bokanowski10}
O.~Bokanowski, N.~Forcadel, and H.~Zidani, ``{Reachability and minimal times
  for state constrained nonlinear problems without any controllability
  assumption},'' \emph{SIAM Journal on Control and Optimization}, pp. 1--24,
  2010.

\bibitem{Margellos11}
K.~Margellos and J.~Lygeros, ``{Hamilton-Jacobi Formulation for Reach-Avoid
  Differential Games},'' \emph{IEEE Transactions on Automatic Control},
  vol.~56, no.~8, Aug 2011.

\bibitem{Margellos13}
------, ``{Toward 4-D Trajectory Management in Air Traffic Control: A Study
  Based on Monte Carlo Simulation and Reachability Analysis},'' \emph{IEEE
  Transactions on Control Systems Technology}, vol.~21, no.~5, Sept 2013.

\bibitem{Crandall83}
M.~G. Crandall and P.-L. Lions, ``Viscosity solutions of {Hamilton-Jacobi}
  equations,'' \emph{Transactions of the American Mathematical Society}, vol.
  277, no.~1, pp. 1--42, 1983.

\bibitem{Khattar10}
D.~Khattar, \emph{The Pearson Guide to Complete Mathematics for AIEEE},
  3rd~ed.\hskip 1em plus 0.5em minus 0.4em\relax Pearson Education India, 2010.

\end{thebibliography}
\end{document}